\newcommand{\muSPARK}{$\mu$SPARK}
\lstdefinestyle{spark}{
	language=Ada,
	keywordstyle=\bfseries\ttfamily\color[rgb]{0,0,1},
	identifierstyle=\ttfamily,
	commentstyle=\ttfamily\color[rgb]{0.133,0.545,0.133},
	stringstyle=\ttfamily\color[rgb]{0.627,0.126,0.941},
	  morekeywords=[1]some,
	showstringspaces=false,
	basicstyle=\small\ttfamily,
	stepnumber=1,
	numbersep=10pt,
	tabsize=2,
	breaklines=true,
	prebreak = \raisebox{0ex}[0ex][0ex]{\ensuremath{\hookleftarrow}},
	breakatwhitespace=false,
	aboveskip={1.5\baselineskip},
	columns=fixed,
	extendedchars=true,
        literate={>=}{>{}={}}{2}{<=}{<{}={}}{2}
}
\newcommand{\reportsubject}{Master 2 Internship Report - MPRI}
\newcommand{\reporttitle}{Safe Pointers in SPARK 2014.}     
\newcommand{\reportauthor}{Georges-Axel \textsc{Jaloyan}} 
\newcommand{\HRule}{\rule{\linewidth}{0.5mm}}
\newcommand{\rulesep}{\unskip\ \vrule height -1ex\ }
\newtheorem{lemma}{Lemma}
\newtheorem{theorem}{Theorem}
\newtheorem{XxmpX}{Proof of theorem} 
\newenvironment{preuve}    
{\pushQED{\qed}\begin{XxmpX}}
	{\popQED\end{XxmpX}}
\begin{document}
	\begin{titlepage}
		
		\begin{center}

			\textsc{\Large \reportsubject\\[0.5cm]20 March 2017 - 4 August 2017}\\[2.5cm]
			\HRule \\[0.4cm]
			{\huge \bfseries \reporttitle}\\[0.4cm]
			\HRule \\[2cm]
			
			\begin{minipage}[t]{0.3\textwidth}
				\begin{flushleft} \large
					\emph{Author:}\\
					\reportauthor
				\end{flushleft}
			\end{minipage}
			\begin{minipage}[t]{0.6\textwidth}
				\begin{flushright} \large
					\emph{Supervisor:} \\
					Dr.~Yannick \textsc{Moy}
				\end{flushright}
			\end{minipage}
			\\[4cm]
			\centering
			\begin{minipage}[t]{0.4\textwidth}\vspace{40pt} 
				\begin{flushleft}
					\includegraphics[width=60mm]{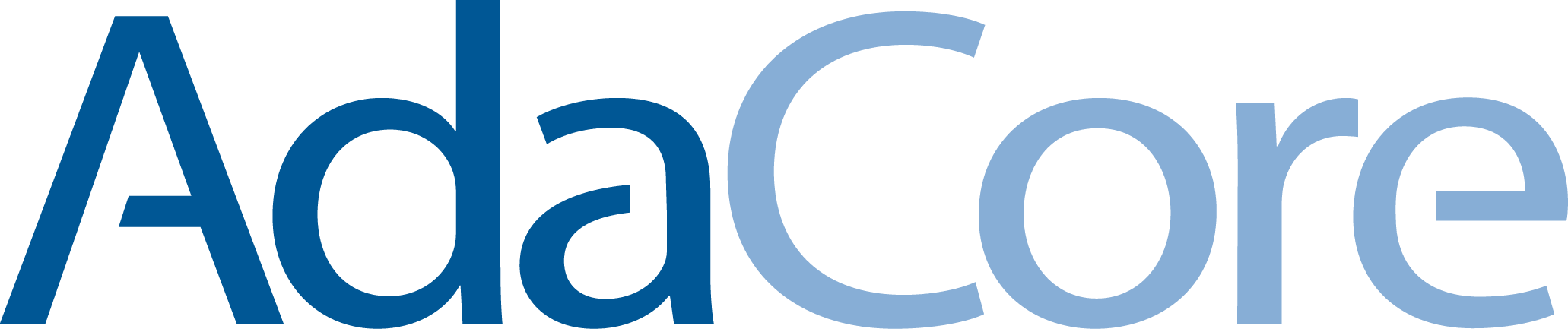}
				\end{flushleft}
			\end{minipage}
			\begin{minipage}[t]{0.4\textwidth}\vspace{0pt} 
				\begin{flushleft}
					\includegraphics[width=40mm]{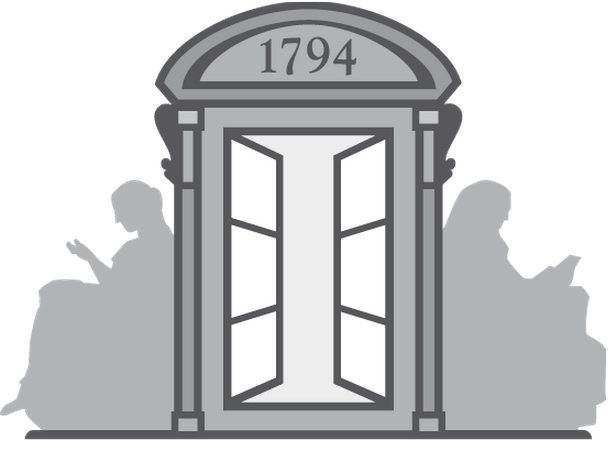}
				\end{flushleft}
			\end{minipage} 
		
			\begin{minipage}[t]{0.4\textwidth}\vspace{0pt} 
				\begin{flushleft}
					\textsc{\LARGE AdaCore}
				\end{flushleft}
			\end{minipage}
			\begin{minipage}[t]{0.4\textwidth}\vspace{0pt} 
				\begin{flushleft}
					\textsc{\LARGE \'Ecole normale sup\'erieure}
				\end{flushleft}
			\end{minipage} \\[1.5cm]
			\vfill
			
			{\large 21 August 2017}
			
		\end{center}
		
	\end{titlepage}

	\title{\reportsubject\\\reporttitle}
	
	\author{\reportauthor, supervised by Yannick \textsc{Moy}, AdaCore.}
	
	\date{20 March 2017 - 4 August 2017}
	
	\maketitle
	
	\pagestyle{empty} %
	\thispagestyle{empty}

	\subsection*{The general context}
	This report is referring to an internship from the 20\textsuperscript{th} of March 2017 to the 4\textsuperscript{th} of August 2017 at \emph{AdaCore} located in Paris, France. AdaCore is a company providing commercial software solutions for Ada, a language targeting safety-critical applications. More specifically, SPARK Pro \cite{mccormick_chapin_2015, sparkrm, sparkuser} is a static analysis tool-suite using formal methods to deductively verify a subset of Ada (called SPARK), commercialized in partnership with Altran. The tool can be used to meet DO-178B/C (and the Formal Methods supplement DO-333), CENELEC 50128, IEC 61508, and DEFSTAN 00-56 standards. Its latest revision is called SPARK 2014.
	
	Deductive verification uses SMT solvers (CVC4, Z3, Alt-Ergo) and interactive theorem provers (Coq and Isabelle) to check the validity of logical propositions (called \emph{verification conditions}) generated from the source code and the specification. Those proofs are enough to give a strong guarantee that the program meets its specification.
	
	The tool is mainly based on Why3\cite{Why3}, an auto-active verification platform using an intermediate language, WhyML, that uses deductive verification techniques to prove user-specified properties (expressed as contracts in SPARK code) or other automatically generated properties such as \emph{absence of run-time errors} (AoRTE). The main limitation in SPARK 2014 is the absence of Ada pointers (called \emph{access types} in Ada\footnote{A more precise definition is given in Ada Reference Manual, section 3.10 \cite{adarm}.}), which may require some complex workarounds and code refactoring to make them accepted by SPARK Pro.
	
	\subsection*{The research problem}
	
	The main problem in adding access types to SPARK is to control aliasing (when multiple pointers point to the same object in memory). This heavily complicates proofs, and generally involves human-written annotations.\footnote{This is the case for example with separation logic\cite{seplogic}.} Cyclone and Rust are examples of languages that control aliasing using compile-time restrictions but none of them allows to prove programs. 
	
	SPARK already allows to control aliasing between "references" (implicit pointers created by the compiler for parameter passing), but not arbitrary pointers. To the best of our knowledge, the proposed method in this report is a novel approach for controlling aliasing introduced by arbitrary pointers in a programming language supported by proof. Previous attempts have relied on a combination of access policy enforced by proof mechanisms (like the so-called Boogie methodology in Spec\#). Our approach does not require user annotations or proof of verification conditions, which makes it much simpler to adopt.
	
	The aim of the internship is to design and implement an aliasing control mechanism that would allow using pointers in SPARK. Such a design needs to be (almost) fully automatic, to do not impact existing proofs of programs without pointers, and should be expressive enough so that existing code (such as the libraries of drivers and containers developed at AdaCore) could be verified using SPARK. Some mathematical guarantees have to be given on the new design, and its implementation should be tested against a large existing code-base in SPARK and Ada.
	
	\subsection*{Your contribution}

	The main scientific contributions of this report are the following: the design of rules for access types in SPARK, inspired by Rust's \emph{borrow-checker} and \emph{affine typing}, enforcing \emph{Concurrent Read, Exclusive Write} (\emph{CREW}) and an intra-procedural move semantics. We also present an implementation in Ada's compiler (called \emph{GNAT}) front-end as a static analysis pass. Then we formalize these rules on a subset of SPARK and give a proof of non-aliasing, as done for a subset of Rust in \cite{rustbelt} and \cite{patina}. Finally, we study these new rules on existing code-base and examples, in order to check for their applicability for real software, and compare them to Rust on some idiomatic constructs.
	
	\subsection*{Arguments supporting its validity}
	
	The rules established by the author were validated at several levels. The Language Design Committee of Ada/SPARK, an international board of experts in Ada, validated the rules as a standard feature for a future version of SPARK (most likely SPARK Pro 19). Comparison with Rust showed a great improvement on aliasing control at the expense of not handling automatic reclamation and non-null pointer coercion, being both checked by other passes in SPARK Pro. 
	
	Tests on industrial Ada code containing pointers (Ada drivers library, big strings containers library) showed that this analysis only requires minor changes to the code in order for it to be accepted in SPARK Pro.
	
	\subsection*{Summary and future work}
	
	First future step would consist in extending SPARK tool-chain to have its flow analysis and proof mechanisms compatible with the new pointers added. Indeed, during the internship, only the anti-aliasing rules have been implemented, whereas the remaining part of the SPARK back-end needs modifications to analyze SPARK code containing pointers. For this purpose, a post-doc has been hired by AdaCore to work on those issues.
	
	Next steps would consist in finalizing the implementation on corner-cases of SPARK, and presenting the results at an international scientific conference. A more general adoption of those rules by different customers as well as the generalization of non-aliasing proofs to the complete SPARK language is expected. A good evolution could also be to formalize those proofs in an interactive theorem prover (such as Coq or Isabelle) in order to have stronger guarantees on the system. 
	
	This internship gives us a better understanding of borrow-checkers and the kind of guarantees they should give, not only theoretically (formalization of what is a borrow-checker, proof of soundness), but also in practice (what a borrow-checker should exactly take care of, how to make it understandable to users, how to integrate it with other tools for static analysis).
	
	A good next question is the study of the constructs that can be implemented using this borrow-checker. How to make the most of a borrow-checker in order to improve parallelism, compiler optimizations, automatic reclamation? Answering these questions could lead to useful applications when transferring the borrow-checker to full Ada, providing it as an additional feature of the language.
	
	\subsection*{Acknowledgment}
	
	This work is partly supported by the Joint Laboratory ProofInUse (ANR-13-LAB3-0007) and project VECOLIB (ANR-14-CE28-0018) of the French National Research Agency (ANR).
	
	A number of people have provided input and advice; we particularly thank
        Claire Dross (AdaCore), Tucker Taft (AdaCore), Raphaël Amiard (AdaCore),
        Steve Baird (AdaCore), Claude March\'e (Inria), Andrei Paskevich (LRI),
        Jean-Christophe Filliâtre (LRI), Jacques-Henri Jourdan (MPI-SWS),
        Sylvain Dailler (AdaCore), Cl\'ement Fumex (AdaCore).

	\cleardoublepage 
	{\tableofcontents 
	\sloppy          

	\cleardoublepage
	
	\section{Preliminaries}
	\label{Preliminaries}
	
	\subsection{Ada}
	
	Ada is a general purpose language originally standardized in 1983 for the Department of Defense as a safety-oriented language targeting safety-critical systems. Its latest revision, Ada 2012 is defined by ISO/IEC~8652:2012. It features strong typing, objects and packages for modularity, and concurrency handling (Ravenscar profile being the best known\cite{Ravenscar}). A typical Ada program is made of several packages that can contain declarations (functions, procedures, packages, variables, and types declarations) or initialization code, that will be executed when the package is loaded.
	
	A lot of features are unique to Ada amongst programming languages for safety-critical applications, in particular modes of parameters, that can be of mode \texttt{in}, \texttt{out} or \texttt{in-out}.\footnote{We added the dash to \texttt{in-out} for readability purposes.} An \texttt{in} parameter can be read but not modified by the callee whereas an \texttt{in-out} parameter can be read and modified, and \texttt{out} parameters can only be modified (and are considered uninitialized), their values being sent back to the caller. 
	
	The language has many features inherited from older versions, and its safety concerns require it to have a well-defined semantics detailed in the \emph{Ada Reference Manual} \cite{adarm}.
	
	\subsection{SPARK 2014}
	SPARK~2014 is a subset of Ada designed and suited for static analysis using formal methods. SPARK~2014 features specification constructs allowing writing pre and post conditions, assertions, invariants, data dependencies. A SPARK program is either made of  fully written SPARK code, or is a subset of an Ada program where some subprograms or packages have been marked with a special annotation called \texttt{SPARK\_Mode}, which allows mixing SPARK and non-SPARK code.
	
	The main restrictions with respect to full Ada are the absence of pointers and aliasing (some checks are already done in SPARK to prevent aliasing between parameters and globals) and concurrency handling (only \emph{extended Ravenscar profile} \cite{extendedRavenscar} is allowed). We also require that objects must fulfill the \emph{Liskov Substitution Principle} (LSP) \cite{Liskov}, which prevents a more specialized subclass to have its invariants violated by the methods of the parent class. More specifically this forces the more specialized subclass to implement sub-behaviors of the parent class with respect to contracts expressed on the parent methods. 
	
	In this report, we study a small subset of SPARK enriched with pointers, called {\muSPARK}, the grammar of which is given in Appendix~\ref{sec:Grammar}. Each {\muSPARK} program consists of a single file that contains a main procedure. Each procedure starts with a list of definitions (type, procedure or variable) followed by a body. This body is a sequence of statements, typically assignments, allocations, procedure calls, conditional statements. We abstract out the condition of the if statement, considering it as a non-deterministic branching.
	
	\subsection{Proofs in presence of aliasing}
	To illustrate how aliasing can cause problems with proofs, let us consider the following function with two parameters, that increases each of them. The contract that could be expected of such a function (written in C annotated with ACSL contracts \cite{acslrm}), ensures that the two parameters have their value increased by one at the end of the function.
	
\begin{lstlisting}[basicstyle=\small\ttfamily,basicstyle=\small\ttfamily,language=C]
/*@
 requires \valid(x) && \valid(y);
 ensures (*x == \old(*x)+1 && *y == \old(*y)+1 );
*/
void inc (int* x, int* y) {
  *x++; *y++;
}
\end{lstlisting}

	This contract is false in presence of aliasing. Indeed if \texttt{x} and \texttt{y} point to the same variable, then the final value of each parameter would get increased by two instead of one. In ACSL, it is possible to use a logic function to express separation of sets of pointers by adding annotations specifying that two pointers should be considered as non-aliased. The contract hence becomes as following, which gets easily proved using the WP or Jessie plug-in of Frama-C toolset
	
\begin{lstlisting}[basicstyle=\small\ttfamily,language=C]
/*@
 requires \valid(x) && \valid(y);
 requires \separated(x,y);
 ensures (*x == \old(*x)+1 && *y == \old(*y)+1 );
*/
\end{lstlisting}
	
	\subsection{Rust}
	
	Rust is a new programming language created by Mozilla Foundation \cite{rust} and intended for system programming. It focuses mainly on memory safety, with the help of a powerful mechanism called a \emph{borrow-checker}, that prevents non-safe aliasing through pointer analysis with techniques inspired by affine types (in which resources can be used at most once) \cite{GIRARD19871, Affinetypes}. 
	
	The borrow-checker analyzes the source code looking for two different constructs: moves and borrows. 
	
	Moves happen at the right-hand side of assignments as well as when passing parameters to functions that are not borrowed (see after). Resources that are moved are consumed (exactly as in affine types), which allows only a single path to read or modify resources at a time. 
	
	Borrows are temporary grants of a given resource to another path, that can be either 
	\emph{immutable} (read-only, using the symbol \texttt{\&}) or \emph{mutable} (read-write, using the symbol \texttt{\&mut}). The borrow checker verifies that when grants are given, the CREW principle is respected, raising errors at each violation. As an example, let us consider the following code which borrows mutably twice the same path, triggering an error.
	
\begin{lstlisting}[basicstyle=\small\ttfamily,language=C]
let mut r = String::from("hello");

let mut s = r;   // move r to s
let mut s2 = r;  // error: use of moved value: `r`
let r1 = &mut s; // mutable borrow of s
let r2 = &mut s; // error[E0499]: cannot borrow `s` as mutable more than once
\end{lstlisting}
	
	Rust's borrow-checker also checks the lifetimes of pointers to make sure that no dangling pointer is used, and to provide a compile-time garbage collector.

	\subsection{Main objective}
	
	The goal of this internship is to design a mechanism that controls aliasing in SPARK, based on enforcing \emph{CREW}. We use a model in which a syntactical element has (full or partial) ownership of all data that is accessible from it (e.g. a pointer owns the data pointed by it, ...). 
	
	The CREW mechanism must ensure that two aliased paths do not have full ownership of the underlying data. More specifically, either one path has the full ownership (read-write) and the others have no ownership at all, or all of them have at most partial ownership (read-only).
	
	\section{Access types in {\muSPARK}}
	
	Three mechanisms have been created to ensure aliasing control, named respectively move, borrow and observe. The first one consists in checking assignments, whereas the two others focus on subprogram calls. We present the aliasing rules for {\muSPARK}, while explaining the main differences when passing to complete SPARK in section \ref{sec:completeSPARK}.
	
	\subsection{Definitions}
	\label{sec:Rules}
	We define \emph{paths} as abstractions of left values (called \emph{names} in Ada). More precisely, a path is a name in which all array indices have been abstracted and all dereferences (or equivalently \emph{indirections}) are made explicit.\footnote{Ada has implicit pointer dereference.} In the case of {\muSPARK}, names and paths are the same, given that implicit dereference as well as arrays are excluded from the language. Note that in Ada, dereferences are written using the \texttt{.all} selector.
	
	By introducing the notion of prefixes and extensions of paths, we can use trees to represent a set of paths, as shown in Figure~\ref{fig:original}. This tree is defined as the following. Each node represents a path, with the root being the base identifier. Those nodes can be of three different types: either an Integer node (such as \texttt{My\_Var.all.z}), a Pointer node (such as \texttt{My\_Var} or \texttt{My\_Var.all.y}), or a Record node (such as \texttt{My\_Var.all}).
	
	\begin{wrapfigure}{r}{0.35\textwidth}
		\vspace{-1cm}
		\includegraphics[width=0.35\textwidth]{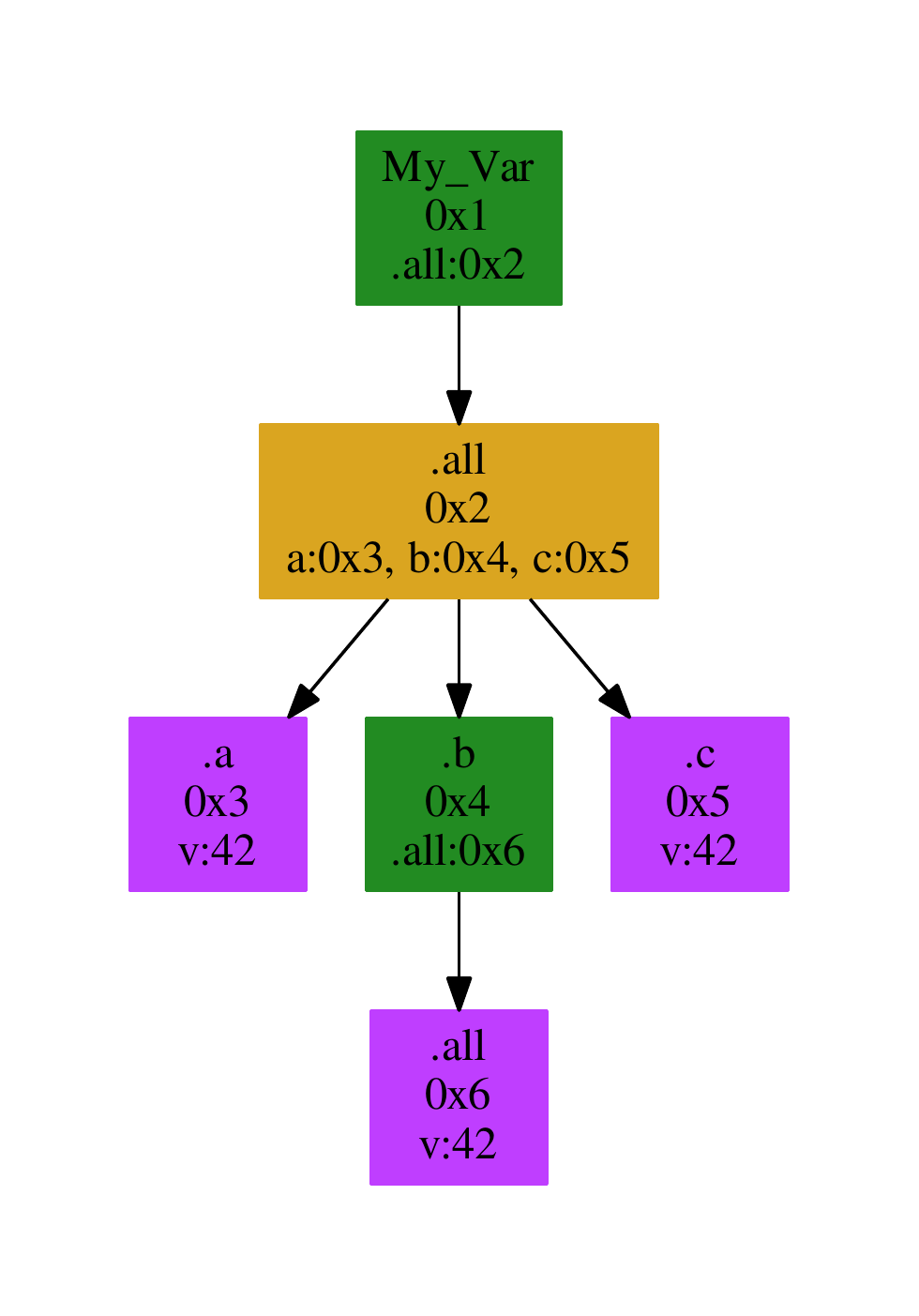}
		\vspace{-1cm}
		\caption{Example of a tree representing the following paths: \texttt{My\_Var}, \texttt{My\_Var.all}, \texttt{My\_Var.all.x}, \texttt{My\_Var.all.y}, \texttt{My\_Var.all.z}, \texttt{My\_Var.all.y.all}.}
		\label{fig:original}
		\vspace{-1cm}
	\end{wrapfigure}

	We will focus our analysis on types that can cause aliasing problems during assignment or parameter passing, that we call \emph{deep}, that are intuitively pointers and records having at least one deep type component. Formally, a type is said to be \emph{deep} if it is a type having an access part.\footnote{See Ada Reference Manual, section 3.2.(6/2).} Types that are not deep are called \emph{shallow}.

	\subsubsection{Reading and writing a path}
	Let us consider a path (example \texttt{My\_Var.all}). Our memory model gives to this path the ownership of all its extensions (hence all the subtree that is rooted at \texttt{My\_Var.all}). 
	
	When reading this path (for example at a the right-hand side of an assignment), every extension of this path is read (i.e. \texttt{My\_Var.all}, \texttt{My\_Var.all.x}, \texttt{My\_Var.all.y}, \texttt{My\_Var.all.y.all}, \texttt{My\_Var.all.z}). 
	
	Symmetrically, when writing this path (left-hand side of an assignment), only extensions that have the same number of dereferences are written (\texttt{My\_Var.all}, \texttt{My\_Var.all.x}, \texttt{My\_Var.all.y}, \texttt{My\_Var.all.z} but \textbf{not} \texttt{My\_Var.all.y.all}). Every pointer accessible from the path gets smashed (\texttt{My\_Var.all.y}), hence the memory areas designated by those pointers are replaced by new areas instead of being written.

	This asymmetry is fundamental when considering the rules for non-aliasing. Reading follows the pointers underlying a data structure whereas writing is stopped at those pointers, that get smashed instead.
	
	\subsubsection{Permissions}
	\label{sec:Permission}
	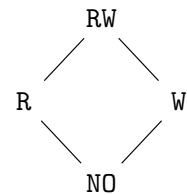
\begin{wrapfigure}{r}{0.30\textwidth}
		\centering
		\begin{tikzcd}[tips=false,column sep=1em,row sep=1.5em]
			&& \texttt{RW} \ar{dl} \ar{dr}\\
			& \texttt{R} \ar{dr} && \texttt{W} \ar{dl} \\
			&& \texttt{NO}
		\end{tikzcd}
		\caption{Hasse diagram for the permission lattice.}
		\label{fig:hasse}
		\vspace{-1cm}
	\end{wrapfigure}
	
	To explicitly control aliasing, we must restrict the readability and writability of some paths to ensure that only one path can write a given memory area at a time. For this, we add explicitly permissions on the previous trees, that can be of four different values: \texttt{RW} (read-write), \texttt{W} (write-only), \texttt{R} (read-only), \texttt{NO} (no permission). Those permissions form a lattice, whose Hasse diagram is given in Figure~\ref{fig:hasse}.
	
	Those trees enriched with permissions are called \emph{permission trees}, and their formal definition is given is Appendix~\ref{sec:Permission Rules for MuSPARK}. Note that those trees are of infinite depth when a structure has a component of type access to itself (linked lists, graphs,~...).
	
	\subsection{Moving}
	
	Moving is a mechanism that transfers ownership from one or more paths (called the \emph{moved paths}, the right-hand side) to another path (the \emph{assigned path}, the left-hand side). 
	
	In order to find all the paths present in the right-hand side that are considered as moved, we first define the concept of \emph{moved expression}, and then gather from those expressions every path on which we check the adequate permissions and update them accordingly. Finally, we update the permission of the assigned path.
	
	More formally, an expression appearing in the right-hand side of an assignment statement is a \emph{moved expression} when: 
	\begin{itemize}
		\item It is a deep type top-level expression (i.e. expressions that are not part of other expressions).
		\item It is the prefix of an \texttt{Access} attribute (equivalent to \texttt{\&} ``address-of" unary operator in C) of a moved expression.\footnote{This prefix is necessarily a name per Ada rules.}
		\item It is a deep type direct sub-expression of a moved expression.
	\end{itemize}

	When a moved expression is a name, then this yields directly a moved path. We process each moved path sequentially by checking first its Read-Write permission, and then deleting both read and write permission to any path that gets aliased with the assigned path. Using our tree representation, one could remark that when a path (ie. node) is aliased, then all the subtree rooted at that node is also aliased. Hence we can  directly consider aliases of trees.
	\begin{figure}[!hb]
		\centering
		\footnotesize
		\begin{subfigure}[t]{0.3\textwidth}
			\includegraphics[width=\textwidth]{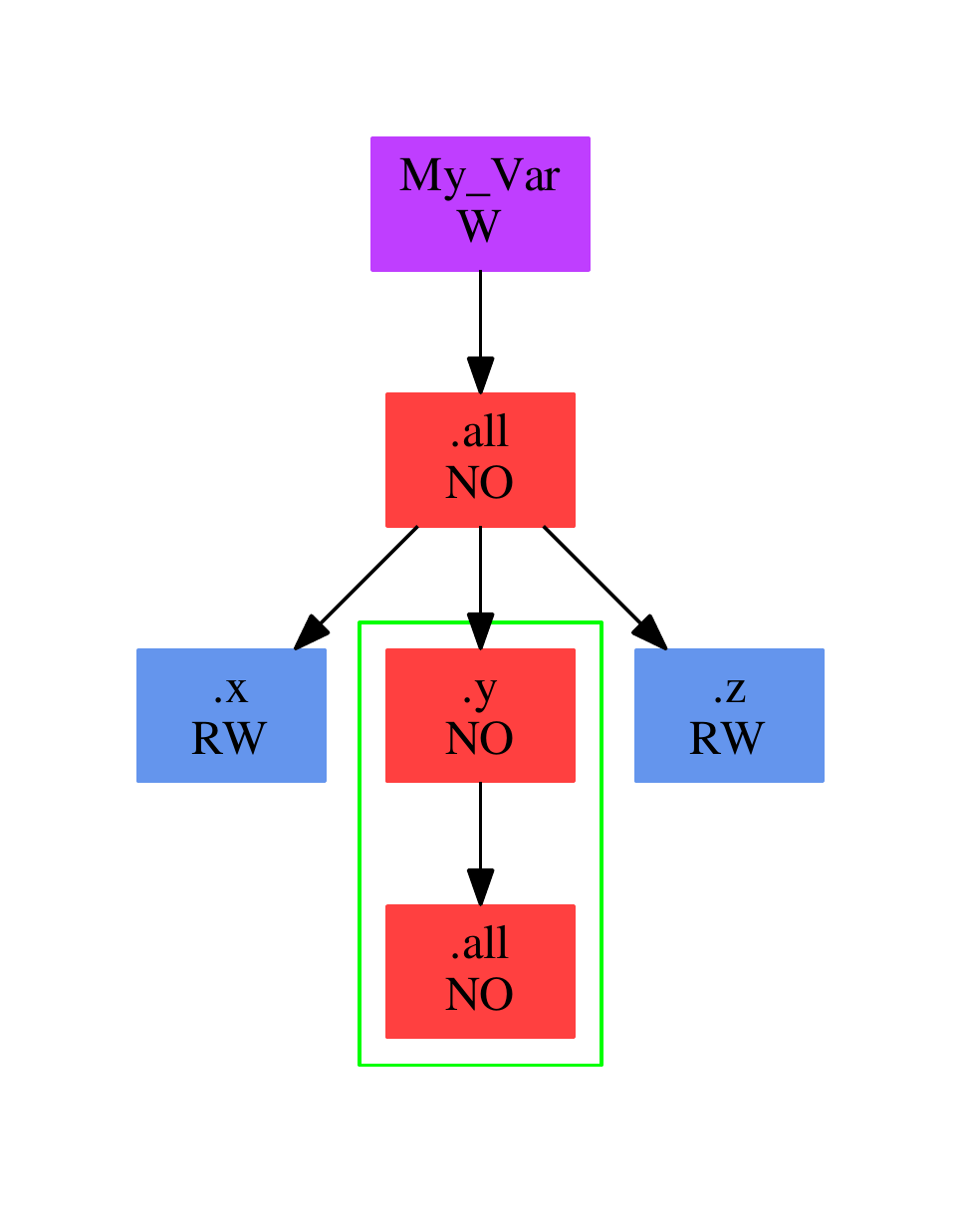}
			\centering
			\caption{Example of a tree with its permissions updated after the path \texttt{My\_Var.all.y} has been moved with \texttt{Access} attribute. The aliased subtree is framed in green.}
			\label{fig:myvarallyaccess}
		\end{subfigure}\rulesep
	\begin{subfigure}[t]{0.3\textwidth}
	\includegraphics[width=\textwidth]{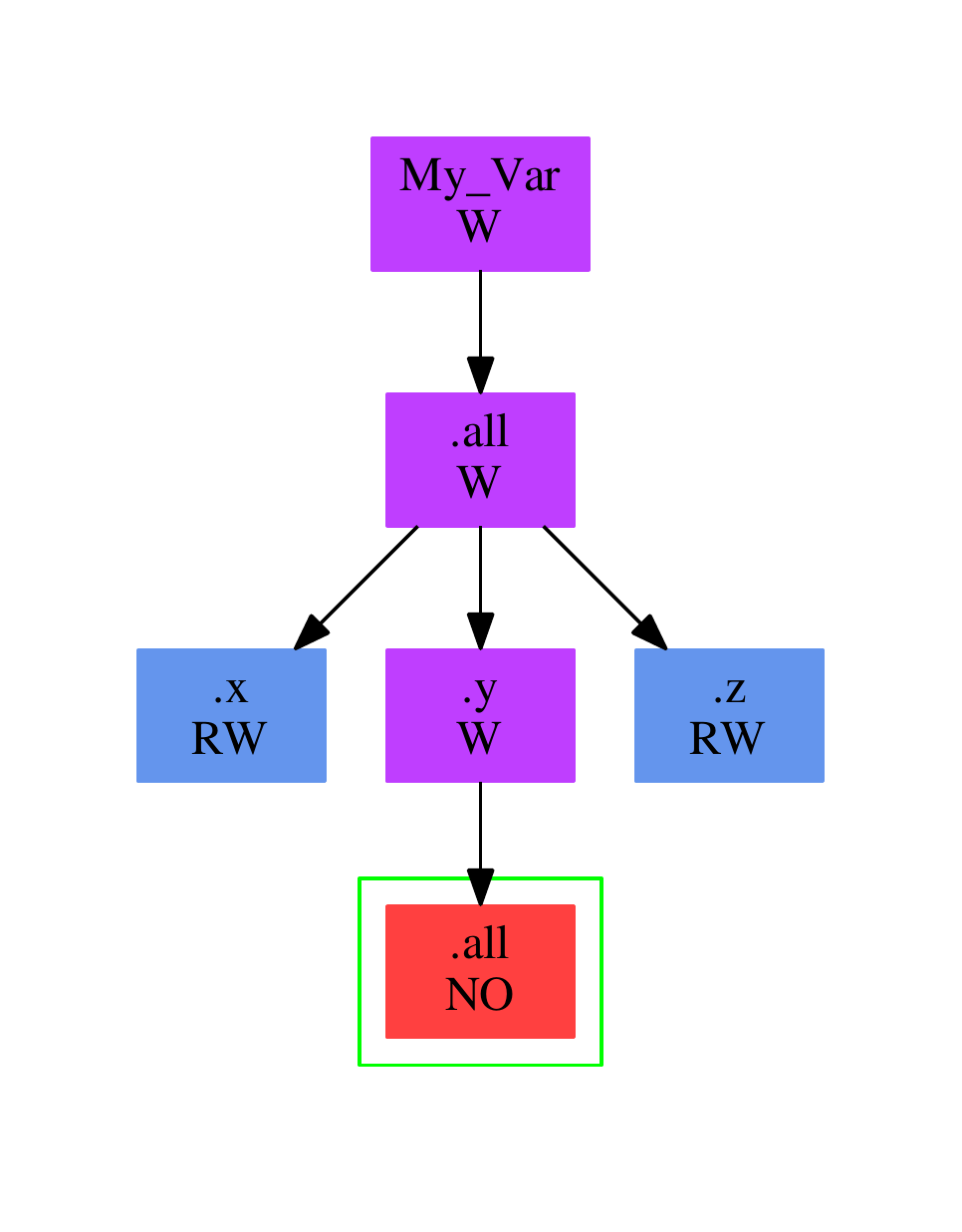}
	\centering
	\caption{Example of a tree with its permissions updated after the path \texttt{My\_Var.all.y} has been moved without \texttt{Access} attribute. The aliased subtree is framed in green.}
	\label{fig:myvarally}
\end{subfigure}\rulesep
\begin{subfigure}[t]{0.3\textwidth}
	\includegraphics[width=\textwidth]{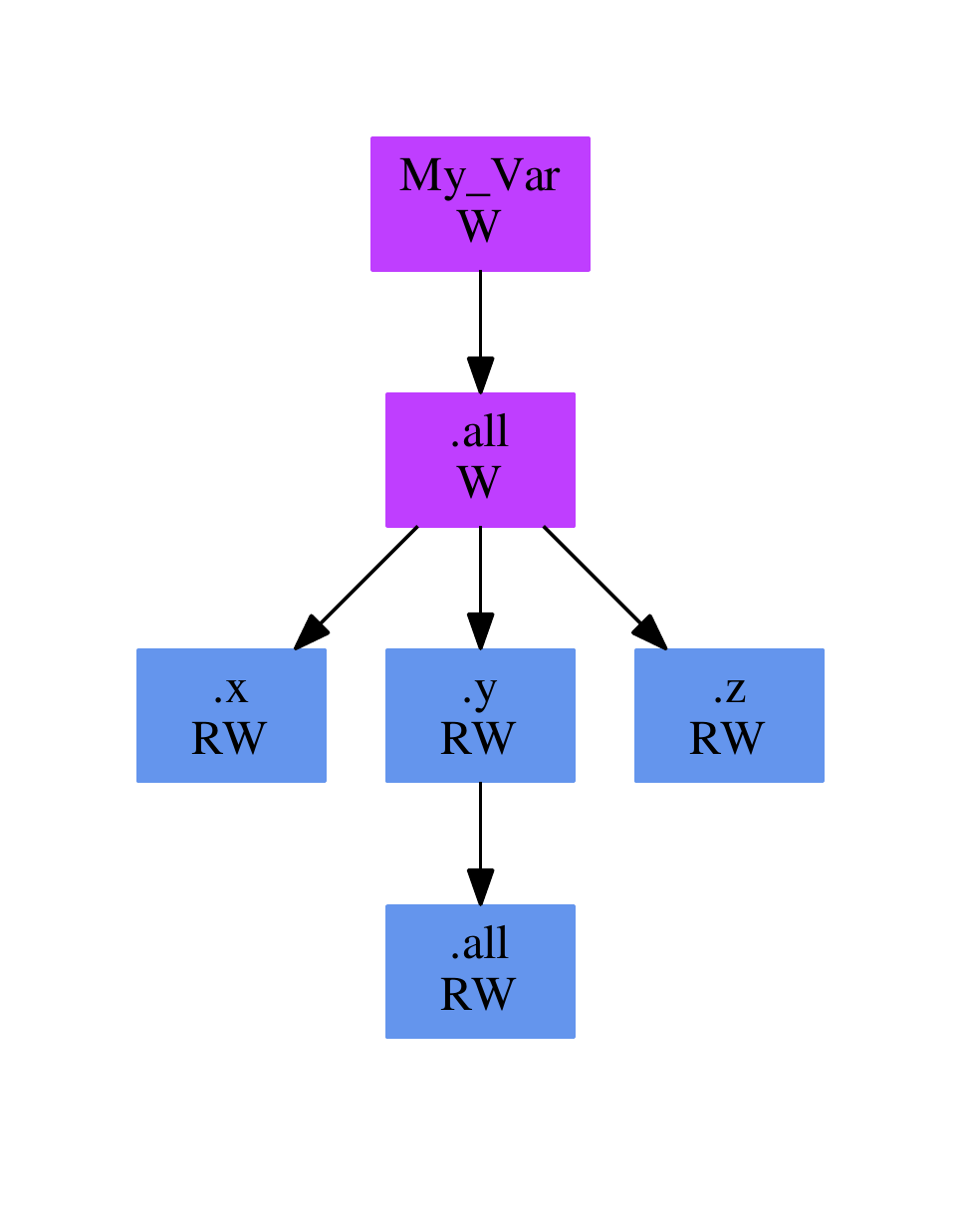}
	\centering
	\caption{The previous tree (Figure~\ref{fig:myvarally}) after assigning \texttt{My\_Var.all.y}. A further propagation will give \texttt{RW} permission to \texttt{My\_Var.all} and \texttt{My\_Var}.}
	\label{fig:assignmyvarally}
\end{subfigure}
	\end{figure}

	If the path is yielded from the prefix of an \texttt{Access} attribute, then the whole subtree is aliased. Hence, the whole subtree sees its permission changed to \texttt{NO}, as well as any ancestor node (prefix) of this path that have the same number of indirections. All other ancestors only lose their readability and get their permission updated to \texttt{W}.\footnote{The exact wording as appearing in the SPARK RM says: ``After moving a path under \texttt{‘Access} attribute, the permission for the path, any of its extensions, and any strict prefix that contains the same number of \texttt{.all} becomes \texttt{NO} permission. Any strict prefix that does not contain the same number of \texttt{.all} becomes write-only."} 
	
	In the other case, any subtree that has more indirections than the moved path is aliased. Hence those subtrees have their permission changed to \texttt{NO}, while all other nodes of the moved tree only lose their readability. Similarly, all ancestors have their permission set to \texttt{W}.\footnote{The exact wording as appearing in the SPARK RM says: ``After moving a path without an \texttt{‘Access} attribute, the path and any of its prefixes becomes write-only. Any shallow extension of the path that has the same number of \texttt{.all} stays read-write. Any deep extension of the path that has the same number of \texttt{.all} becomes write-only. Any extension of the path that has not the same number of \texttt{.all} becomes \texttt{NO} permission."}
	
	Finally, the assigned path is checked to have either \texttt{W} or \texttt{RW} permission. The assigned path and any of its extensions get \texttt{RW} permission. We then propagate this update to prefixes by increasing their permission with the greatest lower bound of the permissions of their descendants. More details on this propagation are given in Appendix~\ref{sec:PermRelease}, which defines the \textit{PermRelease} operator. 
	
	\subsection{Borrowing}
	
	Borrowing consists in a temporary transfer of ownership of a path from a caller to a callee. They happen only during subprogram call statements. A parameter is said to be borrowed when it is a name of mode \texttt{in-out} or \texttt{out},\footnote{Only names can be sent to procedures with mode \texttt{in-out} or \texttt{out}.} or of mode \texttt{in} and of type access-to-variable.\footnote{See Ada RM, section 3.10.(8) and 3.10.(10) \cite{adarm}.}
	
	The last condition allows sending pointers with mode \texttt{in}, and modifying their referenced value while preventing any modification to the pointer. Ada provides a modifier \texttt{constant} that forbids any modification to the referenced value, and can be used in order send pointers with mode \texttt{in} without granting the permission to write on those values.
	
	Intuitively, borrows can be seen as moving the borrowed path (actual parameter) with an \texttt{Access} attribute to the formal parameter, and then moving back the formal parameter to the actual. However, given that mode \texttt{in} parameters are not necessarily names, this complicates handling the borrows of such parameters, the details being explained in section~\ref{sec:Operational semantics}.
	
	The formal definition of a borrowed expression is similar to move, except that we do not require the top level expression to be deep for modes \texttt{in-out} or \texttt{out} (which coincides well with enclosing the top-level expression in a \texttt{Access} attribute). Hence, an expression in a procedure call is said to be borrowed if:
	\begin{itemize}
		\item It is an \texttt{in} actual parameter of a procedure and is of type access-to-variable.
		\item It is an \texttt{in-out} or \texttt{out} actual parameter to a procedure.
		\item It is the prefix of an \texttt{Access} attribute of a borrowed expression.
		\item It is a deep type direct subexpression of a borrowed expression. 
	\end{itemize}

	After gathering the borrowed paths, we check sequentially that they have \texttt{RW} permission (or at least \texttt{W} for \texttt{out} parameters), and then set any prefix or extension to \texttt{NO} for the whole duration of the call if it is of deep type, \texttt{R} if shallow (so that its value can be used in other arguments without being passed to the callee). At the end of the call, every borrowed path has its permission updated to \texttt{RW}, and as for assigns, we propagate this update to the prefixes of those paths.

	In the callee, any borrowed formal parameter of a procedure with mode \texttt{in} or \texttt{in-out} has \texttt{RW} permission, as well as any of its extensions. Any borrowed formal parameter of a procedure with mode \texttt{out} has \texttt{W} permission, as well as any of its extensions. We also require that at the end of the subprogram, every borrowed parameter should have permission \texttt{RW}, so that they could be moved back to the caller.

	\subsection{Observing}
	Observing is the mechanism that creates read-only aliases for parameters of mode \texttt{in} to subprograms. Formally any deep actual parameter of mode \texttt{in} that is not of type access-to-variable is observed, as well as any deep subexpression of an observed expression.

	After gathering the observed paths, we check sequentially that they have at least \texttt{R} permission and then set any prefix or extension to \texttt{R} for the whole duration of the call. At the end of the call, every observed path has its permission reverted to the one it had before the call.

	In the callee, any observed formal parameter of a procedure, as well as any of its extensions, have \texttt{R} permission.

	\subsection{Control structures}
	
	The previously given rules describe the evolution of permissions at the level of one block of statements. They should be completed with others rules that describe the checks done at different control structures available in SPARK, namely loops and conditions (gotos and exceptions are not available in SPARK). We only include conditions in {\muSPARK} and exclude loops.
	
	For conditions, we apply our rules on each block independently, yielding to different environments, and then merge those by taking, for each path, the greatest lower bound of its permission in the different environments.
	
	For loops, we require permissions to get only less restrictive at the end of the loop, when compared to the entry. This rule forbids moving the same variable to a different element of an array at each iteration.
	
	\section{Proofs of safety}
	
	The section that follows provides a proof of the previously given anti-aliasing rules for the {\muSPARK} language. For this purpose, we first provide a formalization of the {\muSPARK} language with a grammar and typing rules, then with an operational semantics, and finally a formalization of the previously given rules on which we provide a proof of non-aliasing.
	
	\subsection{Grammar and syntax}
	
	We remind that a {\muSPARK} program consists of a single procedure file containing declarations and body. A declaration can either be a procedure, a record type definition or an uninitialized variable declaration. The body consists of a sequence of instructions, that can be assignment statements of existing or freshly allocated values, procedure calls or if statements from which the condition has been abstracted (non-deterministic choice). 
	
	Only integer and user-defined record types are available, as well as accesses to existing types. Note that there are no loops, nor expression operators, which make the language non Turing-complete. 
	
	The full details of the grammar are given in Appendix~\ref{sec:Grammar}.
	
	\subsection{Operational semantics}
	\label{sec:Operational semantics}
	
	\subsubsection{Memory trees}
	The operational semantics is based on a memory model that keeps some relational information about the path leading to the designated memory area. To each path, we associate a memory cell with its value. This leads to the concept of \emph{memory trees} that are quite similar to permission trees with the exception that those trees are finite, given that the access nodes can point to null value. Hence, the \emph{memory environment} $\Upsilon$ is a mapping from variable declarations to memory trees. 
	
	We decide not to represent the implementation of aggregate structures inside the memory, supposed to be compiler-dependent. Thus we keep in our memory trees only a mapping of fields to children subtrees, as if one memory cell is enough for a structure to point to all its children. Indeed, without loss of generality, we can always chose in {\muSPARK} a word size big enough, so that each memory cell contains arbitrarily many pointers. 
	
	Each memory tree is defined in the following way:
	$$ \begin{array}{lcl}
	M & ::= & Integer(Cell, Value) \\
	& | & Record(Cell, Fields \rightarrow M) \\
	& | & Access(Cell, M) \\
	& | & Access(Cell, Null) \\
	& & \\
	\end{array} $$
	
	\subsubsection{Fresh(tau)}
	\label{sec:App-Fresh}
	Formally, memory cells are elements of an arbitrary infinite set. We use the word $\textit{fresh}$ to designate an oracle that gives a new element of the infinite set (seen as an allocation of a machine word). This construct can be easily generalized to allocate a whole tree representing a given type $\tau$, which we call $\textit{fresh}(\tau)$. It takes a type $\tau$ as an input and allocates the required memory areas for the type $\tau$ using the $\textit{fresh}$ oracle, and initializes them with default value.
	
	\begin{itemize}
		\item $\textit{fresh}(\texttt{integer})$ is equivalent to $Integer(\textit{fresh})$.
		\item $\textit{fresh}(\texttt{access}~ \tau)$ is equivalent to $Access(\textit{fresh}, Null)$.
		\item $\textit{fresh}(R)$ is equivalent to $Record(\textit{fresh}, \forall x:\tau \in R ~x\mapsto \textit{fresh}(\tau))$.
	\end{itemize} 
	
	\subsubsection{Assign}
	
	\begin{figure}[!h]
		\centering
		\footnotesize
		\begin{subfigure}[t]{0.3\textwidth}
			\raisebox{18mm}{\includegraphics[width=\textwidth]{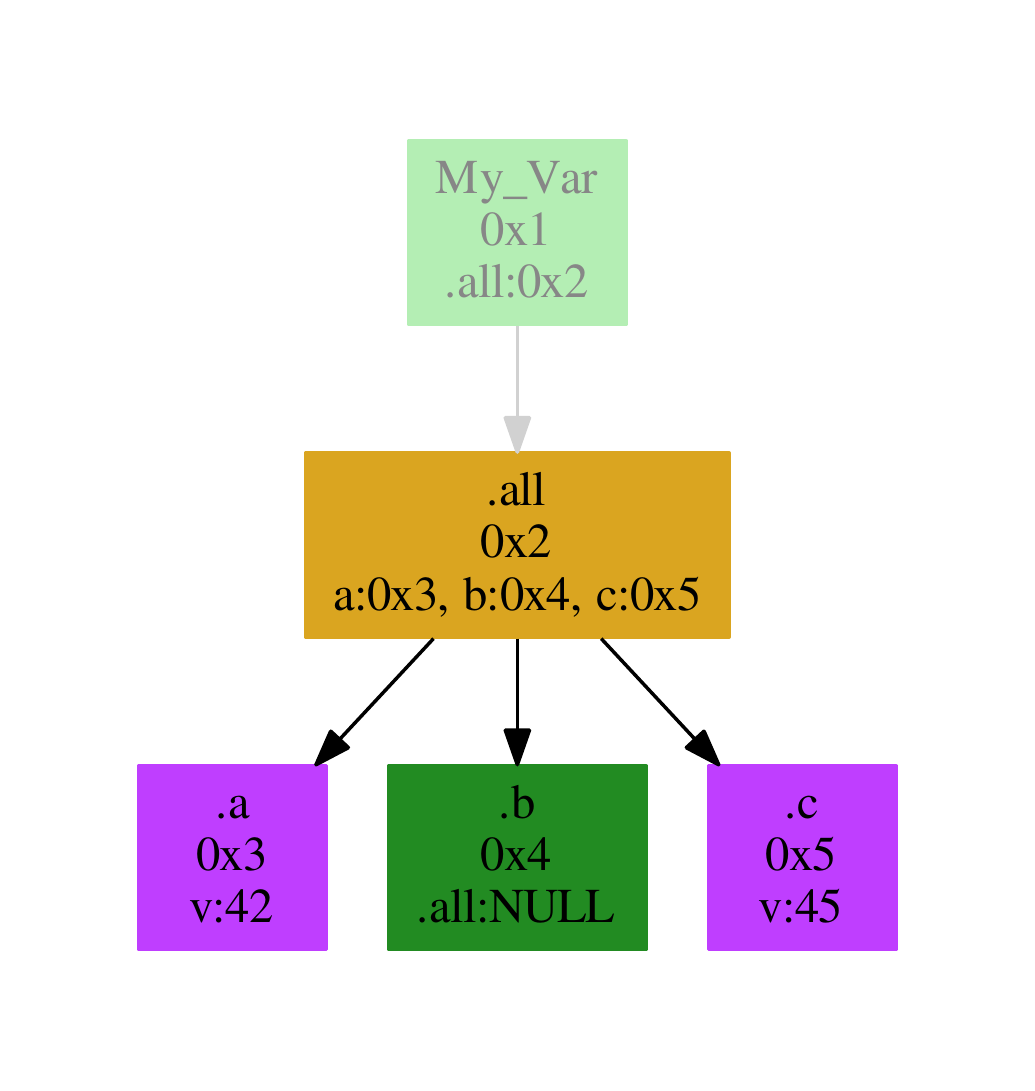}}
			\centering
			\caption{The tree of the left hand side \texttt{My\_Var}. The assigned part (\texttt{My\_Var.all}) is a structure containing three fields, \texttt{a} of type integer with value 42, \texttt{b} of type pointer to NULL, and \texttt{c} of type integer with value 45.}
			\label{fig:semlefthandside}
		\end{subfigure}
		\rulesep
		\begin{subfigure}[t]{0.3\textwidth}
			\includegraphics[width=\textwidth]{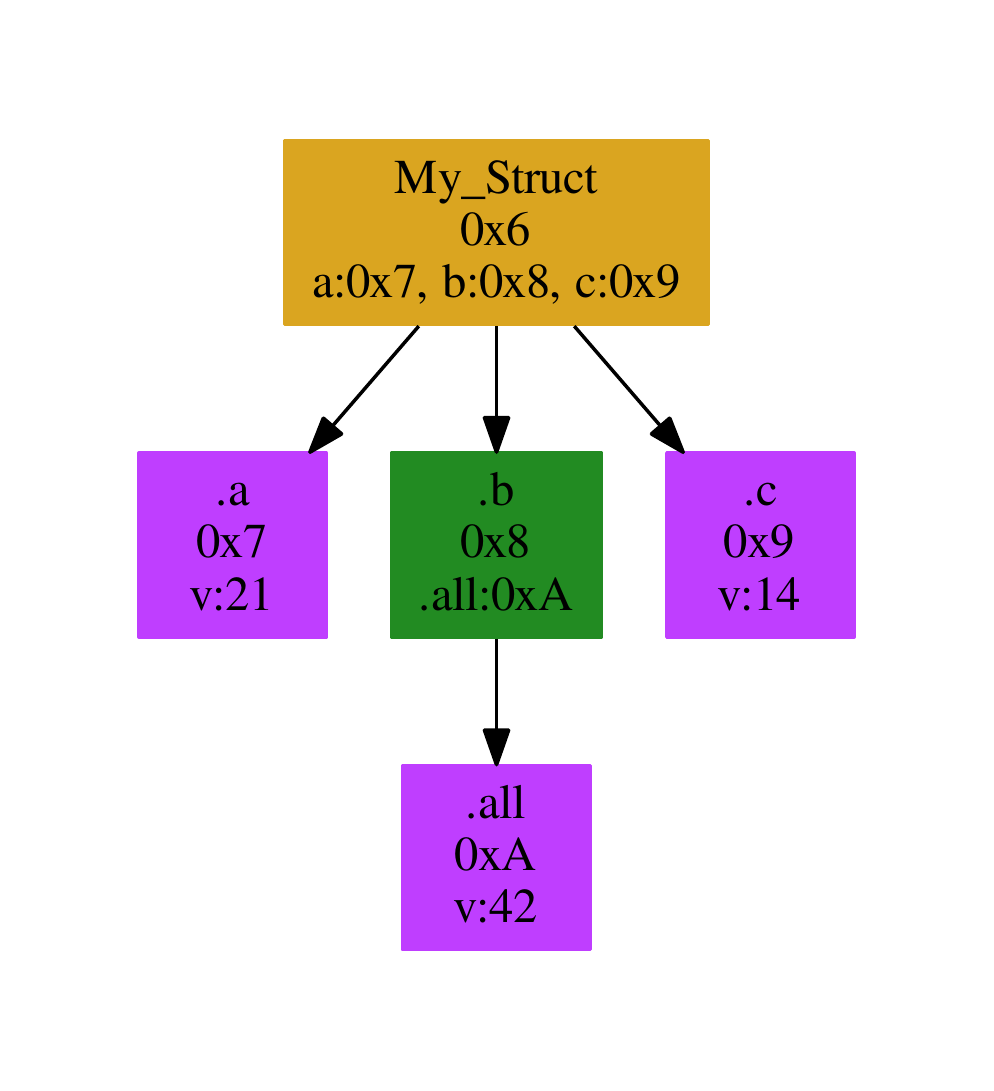}
			\centering
			\caption{The right-hand side \texttt{My\_Struct}. It is a structure containing three fields, \texttt{a} of type integer with value 21, \texttt{b} of type pointer to an integer with value 42, and \texttt{c} of type integer with value 14.}
			\label{fig:semrighthandside}
		\end{subfigure}
		\rulesep
		\begin{subfigure}[t]{0.3\textwidth}
			\includegraphics[width=\textwidth]{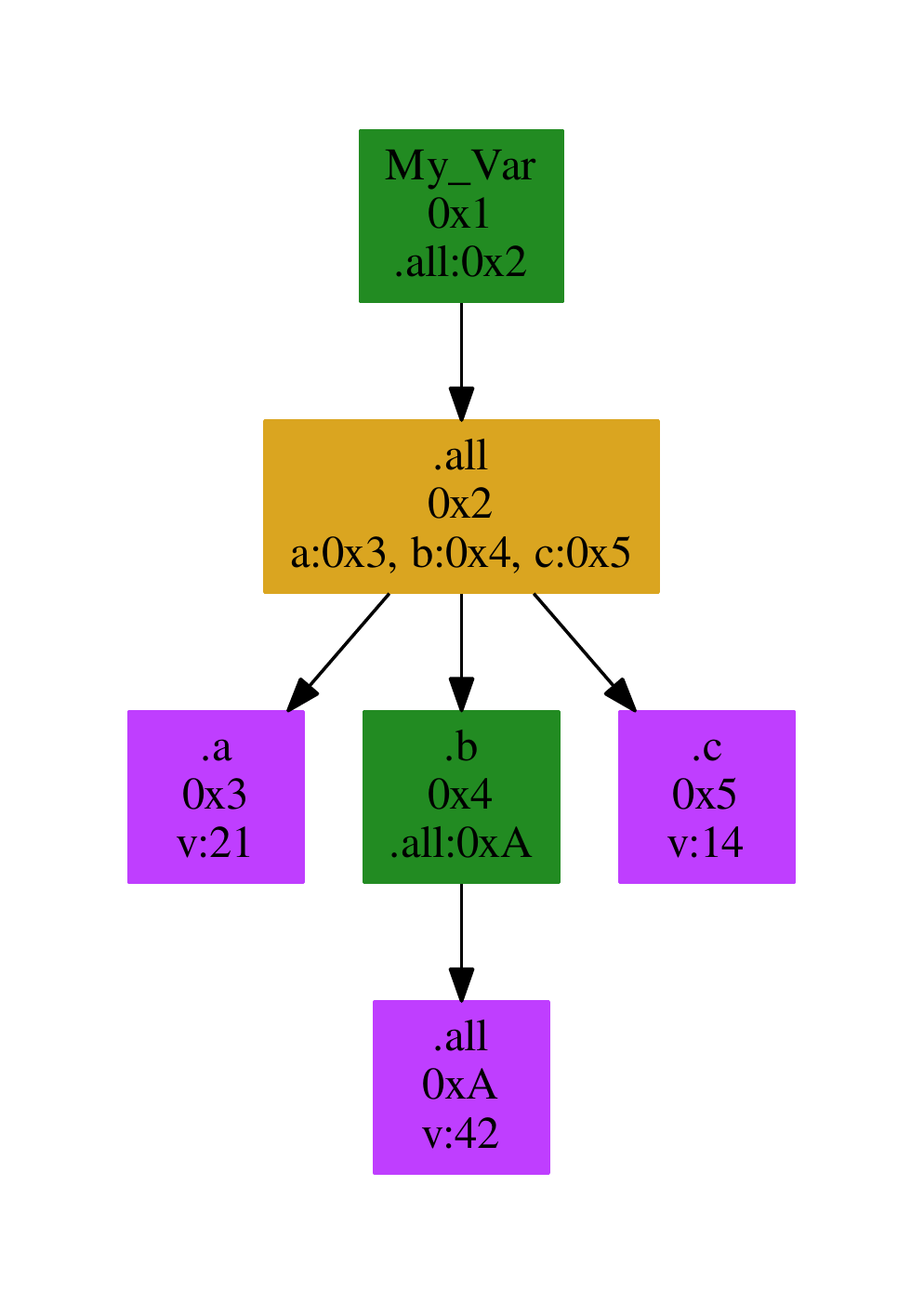}
			\centering
			\caption{The left-hand side after assignment. Note that the values of each node is updated without changing their addresses except for the indirected integer node (address \texttt{0xA}) that is added to the tree.}
			\label{fig:semlefthandsideafter}
		\end{subfigure}
		\caption{Example of the semantic rule \textsc{(E-assignName)} applied to the assignment \texttt{My\_Var.all := My\_Struct;} Pointers are in green, structures in gold, and values in purple.}
		\label{fig:semassign}
	\end{figure}

	The assignment of one subtree to a node is peculiar. Indeed, as in real Ada/SPARK, the left-hand side does not get its memory address changed, only its value. In our representation of the memory this would look like a (deep) replacement of every value in $Integer$ nodes, as well as a replacement of the subtree pointed by every $Access$ node.
	
	This leads to the  following definition of the recursive $\textit{Assign}_\Upsilon(Node, Node)$ operator (the two parameters are of same type):
	\begin{itemize}
		\item $\textit{Assign}(Integer(C, \_), Integer(\_, V)) = Integer(C, V)$
		\item $\textit{Assign}(Access(C, \_), Access(\_, P)) = Access(C, P)$
		\item $\textit{Assign}(Record(C, (F_i)_i), Record(\_, (F'_i)_i)) = Record(C, (\textit{Assign}(F_i, F'_i))_i)$
	\end{itemize}

	For assignments, we copy the value of the left-hand side to the right-hand side. In terms of semantics, this is done by copying the value of each field, except for pointers where we copy the whole subtree in the indirection, as shown in Figure~\ref{fig:semassign}.

	\subsubsection{Procedure calls}
	
	Another peculiarity happens at procedure calls, where we have to copy the actual parameter into a fresh tree representing the formal parameter (\textit{GetFromExpr}, see below), run the semantics on the procedure definition (yielding $\Upsilon''$), and then, depending on mode, copy back its content to the actual parameter. This semantics follows the compiled code, which pushes the parameters in the stack before call, jumps to the callee, and then pops them while moving them to the appropriate registers if those parameters are of mode \texttt{out} or \texttt{in-out}. This can be summarized by the rule \textsc{(E-call)}, whose definition can be found in section~\ref{sec:App-Call}.
	
	The formal definition of $\textit{GetFromExpr}_\Upsilon(e)$ is:
	\begin{itemize}
		\item $\textit{GetFromExpr}_\Upsilon(\texttt{null}) = Access(fresh, Null)$
		\item $\textit{GetFromExpr}_\Upsilon(literal) = Integer(fresh)$
		\item $\textit{GetFromExpr}_\Upsilon(n\texttt{'Access}) = Access(fresh, \Upsilon(n))$
		\item $\textit{GetFromExpr}_\Upsilon(n) = \Upsilon(n)$
	\end{itemize}
	
	We also use $ SetFromExpr_{\Upsilon}(e_i,x_i)$ for updating borrowed parameters with mode \texttt{in} (access-to-variable, note that mode \texttt{in} actual parameters are not solely names, hence the expression needs to be decomposed):
	\begin{itemize}
		\item $ SetFromExpr_{\Upsilon}(\texttt{null},x_i)$: no update
		\item $ SetFromExpr_{\Upsilon}(literal,x_i)$: impossible case
		\item $ SetFromExpr_{\Upsilon}(n,x_i) = Assign(n, \Upsilon(x_i))$
		\item $ SetFromExpr_{\Upsilon}(n\texttt{'Access},x_i) = Assign(n, \Upsilon(x_i.all))$ (valid since formal parameters of mode \texttt{in} cannot be set to \texttt{null} in callee).
	\end{itemize}

	\subsubsection{Semantics for statements}	
	
	Every rule is in the form $\Upsilon.c \xRightarrow{} \Upsilon'$, where $c$ designates any construct of the language. We annotate each semantics rule with $_s$ (statements) or $_d$ (declarations) for readability purposes.
	\label{sec:App-Call}
	\begin{multicols}{2}
		\infrule[E-assignNull]
		{} 
		{\Upsilon.~x \texttt{:= null} \xRightarrow{}_s \Upsilon[x.all\mapsto \texttt{Null}]}
		
		\infrule[E-assignName]
		{n~ \textnormal{name}} 
		{\Upsilon.~x \texttt{:=}~ n \xRightarrow{}_s\Upsilon[Assign(x, \Upsilon(n))]}
		
		\infrule[E-ifConditionTrue]
		{\Upsilon .~ i_1 \xRightarrow{}_s \Upsilon' } 
		{\Upsilon .~\texttt{if * then} ~ i_1 ~ \texttt{else} ~ i_2~ \texttt{end if} \xRightarrow{}_s \Upsilon'}
		
		\columnbreak
		
		\infrule[E-assignLiteral]
		{x:\tau\in\Gamma \andalso e~ \textnormal{literal}} 
		{\Upsilon.~x \texttt{:=}~ e \xRightarrow{}_s \Upsilon[x.value \mapsto e]}
		
		\infrule[E-assignAccess]
		{n~ \textnormal{name}} 
		{\Upsilon,\Phi.~x \texttt{:=}~ n\texttt{'Access} \xRightarrow{}_s \Upsilon[x.all\mapsto \Upsilon(n)]}

		\infrule[E-assignNew]
		{} 
		{\Upsilon.~x \texttt{:=}~ \texttt{new}~\tau \xRightarrow{}_s\Upsilon[x.all\mapsto \textit{fresh}(\tau)]}
		
		\infrule[E-block]
		{\forall j>0,~ \Upsilon_j .~ i_j \xRightarrow{}_s \Upsilon_{j+1}} 
		{\Upsilon_1. ~{\texttt{begin}~ i_1 ~...~ i_n~ \texttt{end}}  \xRightarrow{}_s \Upsilon_{n+1}}
		
		\infrule[E-ifConditionFalse]
		{\Upsilon .~ i_2 \xRightarrow{}_s \Upsilon' } 
		{\Upsilon .~\texttt{if * then} ~ i_1~ \texttt{else} ~ i_2~ \texttt{end if} \xRightarrow{}_s \Upsilon'}

		\infrule[E-call]
		{e_{1}..e_a~ \textnormal{with mode \texttt{in}}, \\ e_{a+1}...e_b~ \textnormal{with mode \texttt{in-out}}, \\~ e_{b+1}...e_n~ \textnormal{with mode \texttt{out}}, \\ \Upsilon'=\{\forall~ 0 < i \leq n, x_i \mapsto \textit{GetFromExpr}_{\Upsilon}(e_i) \} \\
			\Upsilon'.~\texttt{procedure} ~P(x_1,~...~,x_n) \xRightarrow{}_d \Upsilon'' \\
			\Upsilon'''=\Upsilon[\forall~ 0 < i \leq a, \textit{SetFromExpr}_{\Upsilon''}(e_i,x_i)\\
			\forall~ a < i \leq b, Assign(e_i, \Upsilon''(x_i)) \\
			\forall~ b < i \leq n, Assign(e_i, \Upsilon''(x_i))] 
		}
		{\Upsilon.~P(e_1, ~...~, e_n) \xRightarrow{}_s \Upsilon'''}
	\end{multicols}
	
	\subsubsection{Semantics for declarations}

	\infrule[E-uninitDecl]
	{x:\tau\in\Gamma} 
	{\Upsilon.~ x : \tau \xRightarrow{}_d \Upsilon[x\mapsto \textit{fresh}(\tau)]}

	\infrule[E-procedureDecl]
	{
		\Upsilon_1 = \Upsilon \andalso	\forall k, \Upsilon_k . ~d_k \xRightarrow{}_d \Upsilon_{k+1} \andalso
		\Upsilon_{m+1} . ~i \xRightarrow{}_d \Upsilon' }
	{\Upsilon .~\texttt{procedure} ~P(x_1, ... , x_n) ~\texttt{is} ~d_1, ..., d_m ~\texttt{begin}~ i~ \texttt{end} \xRightarrow{}_d \Upsilon'}

	\subsection{Permission rules}
	
	Similarly, the rules defined in section~\ref{sec:Rules} are mathematically formalized, with semantic-like rules that show the evolution of permission trees (defined in section~\ref{sec:Permission}) depending on lexical elements of the language and the mechanism used (borrowed, observed, moved). As for the semantics, we define some constructs to ``allocate''\footnote{Technically, the trees are not allocated, but only constructed by the \textit{Pfresh} operator.} (\textit{Pfresh}), normalize (\textit{PermRelease}), merge (\textit{Fusion}) permission trees. 
	
	As an example, the same assignment shown in Figure~\ref{fig:semassign} would give the following sequence using the rule \textsc{(P-assignDeepName)} (all of them can be found in Appendix~\ref{sec:Permission Rules for MuSPARK}). The whole derivation tree is presented hereafter, for better readability. 
	
	The rule \textsc{(P-assignDeepName)} applied to the assignment \texttt{My\_Var.all := My\_Struct} checks first that \texttt{My\_Var.all} is indeed a well defined deep variable, and that \texttt{My\_Struct} is a indeed a name. The permission corresponding to the node \texttt{My\_Struct} is checked to be Read-Write (using the notation $\Phi(\texttt{My\_Struct})$). Then, we apply the rule \textsc{(P-M-ident)} on the moved name \texttt{My\_Struct} that modifies its permission to write-only. The next part of the \textsc{(P-assignDeepName)} rule changes the permission of every extension of the moved name by putting its strict deep extensions with more indirections to \texttt{NO} permission (\texttt{My\_Struct.b.all}), strict deep extensions with same number of indirections of write-only (\texttt{My\_Struct.b}), and the strict shallow extensions to read-write (\texttt{My\_Struct.a}, \texttt{My\_Struct.c}).\footnote{The \textit{Readability} lemma shows that if a node has read permission, then all its children have also read permission. This gives as a corollary that read-write permissions follow the same rule outside procedure calls. Without this lemma, the changes of permission in the semantics should be interpreted as taking the least upper bound of the given permission and the actual permission. This method is the one implemented in Ada, being more generic.} 
	
	Only after those changes, we check for write permission to the assigned name (\texttt{My\_Var.all}). This order is very important so as to check the safety of assignments of variables to themselves and prevent creating cycles in data-structures (such as assigning \texttt{Tree.left.all := Tree}). Finally, we change the permissions of the assigned path to read-write, as well as every extension, and propagate this update to its prefixes with the \textit{PermRelease} operator that normalizes the permission tree.

	{\small
	$
	\inferrule* [Right=(P-assignDeepName)]
	{\texttt{My\_Var.all} : \texttt{S} ~\textnormal{deep} \in \Gamma \andalso \Phi(\texttt{My\_Struct}) = \texttt{RW} \andalso \inferrule* [ Right=(P-M-ident)] { } {\Phi.~\texttt{My\_Struct} \xrightarrow{Move}_n \Phi[\texttt{My\_Struct}\rightarrow \texttt{W}]} \\ \Phi'' = \Phi[\texttt{My\_Struct}\rightarrow \texttt{W}][\texttt{My\_Struct.b.all} \mapsto \texttt{NO} \andalso \texttt{My\_Struct.b} \mapsto \texttt{W} \\ \texttt{My\_Struct.a} \mapsto \texttt{RW} \andalso \texttt{My\_Struct.c} \mapsto \texttt{RW} ] \andalso \Phi''(\texttt{My\_Var.all}) = \texttt{W,RW} \andalso } 
	{\Phi.~\texttt{My\_Var.all} \texttt{:=}~ \texttt{My\_Struct} \xrightarrow{}_s \textit{PermRelease}(\Phi''[\texttt{My\_Var.all}, \\ \texttt{My\_Var.all.a}, \texttt{My\_Var.all.b}, \texttt{My\_Var.all.b.all}, \texttt{My\_Var.all.c} \mapsto \texttt{RW}])}
	$
}

	The Figure~\ref{fig:assignperm} shows the two permission trees associated with \texttt{My\_Var} and \texttt{My\_Struct} before and after the assignment \texttt{My\_Var.all := My\_Struct}. Given that the situation is symmetrical, the modification in permissions admits as an inverse the reverse assignment \texttt{My\_Struct := My\_Var.all}.
	
	\begin{figure}[!h]
		\centering
		\footnotesize
		\begin{subfigure}[t]{0.23\textwidth}
			\includegraphics[width=\textwidth]{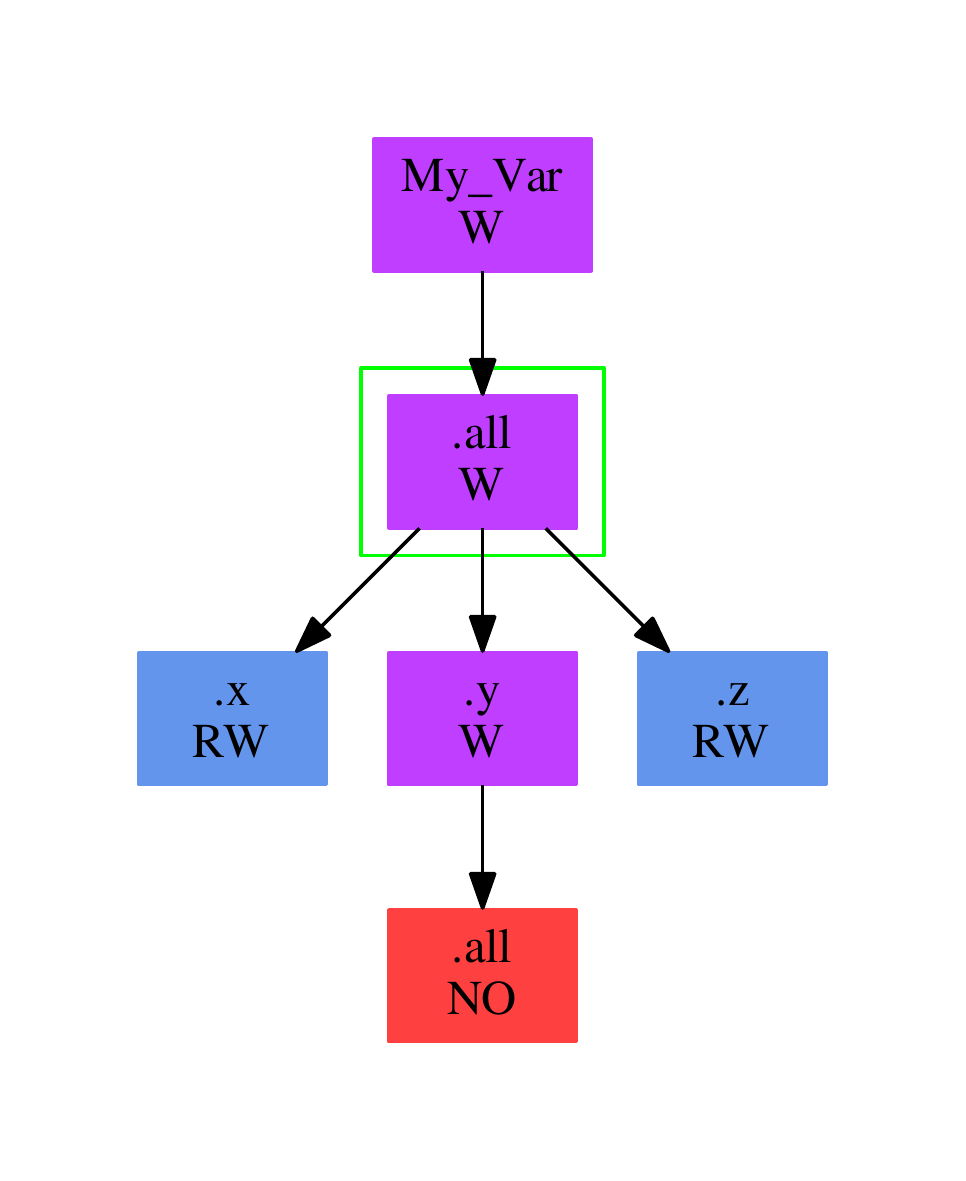}
			\centering
			\caption{The tree of the left hand side \texttt{My\_Var} before assignment. The assigned part (\texttt{My\_Var.all}) is shown in green.}
			\label{fig:assignleft}
		\end{subfigure}
		\rulesep
		\begin{subfigure}[t]{0.23\textwidth}
			\includegraphics[width=\textwidth]{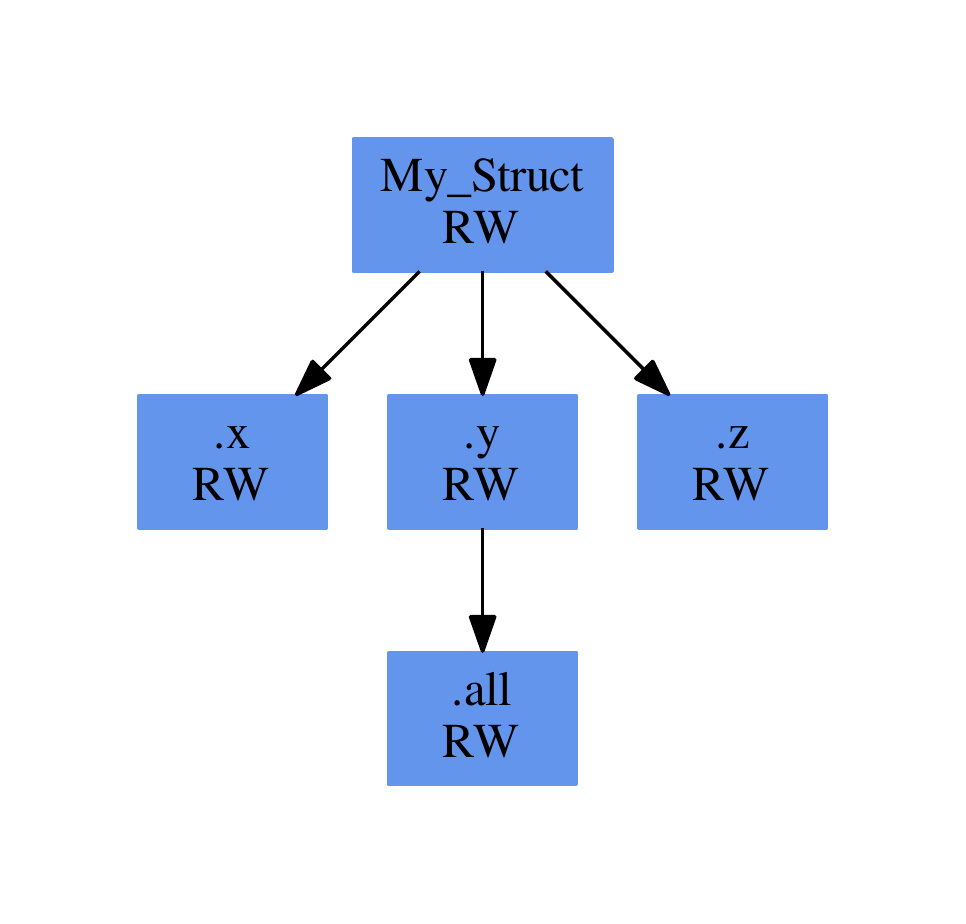}
			\centering
			\caption{The tree of the right hand side \texttt{My\_Struct} before assignment. }
			\label{fig:assignleftafter}
		\end{subfigure}
	\rulesep
\begin{subfigure}[t]{0.23\textwidth}
\includegraphics[width=\textwidth]{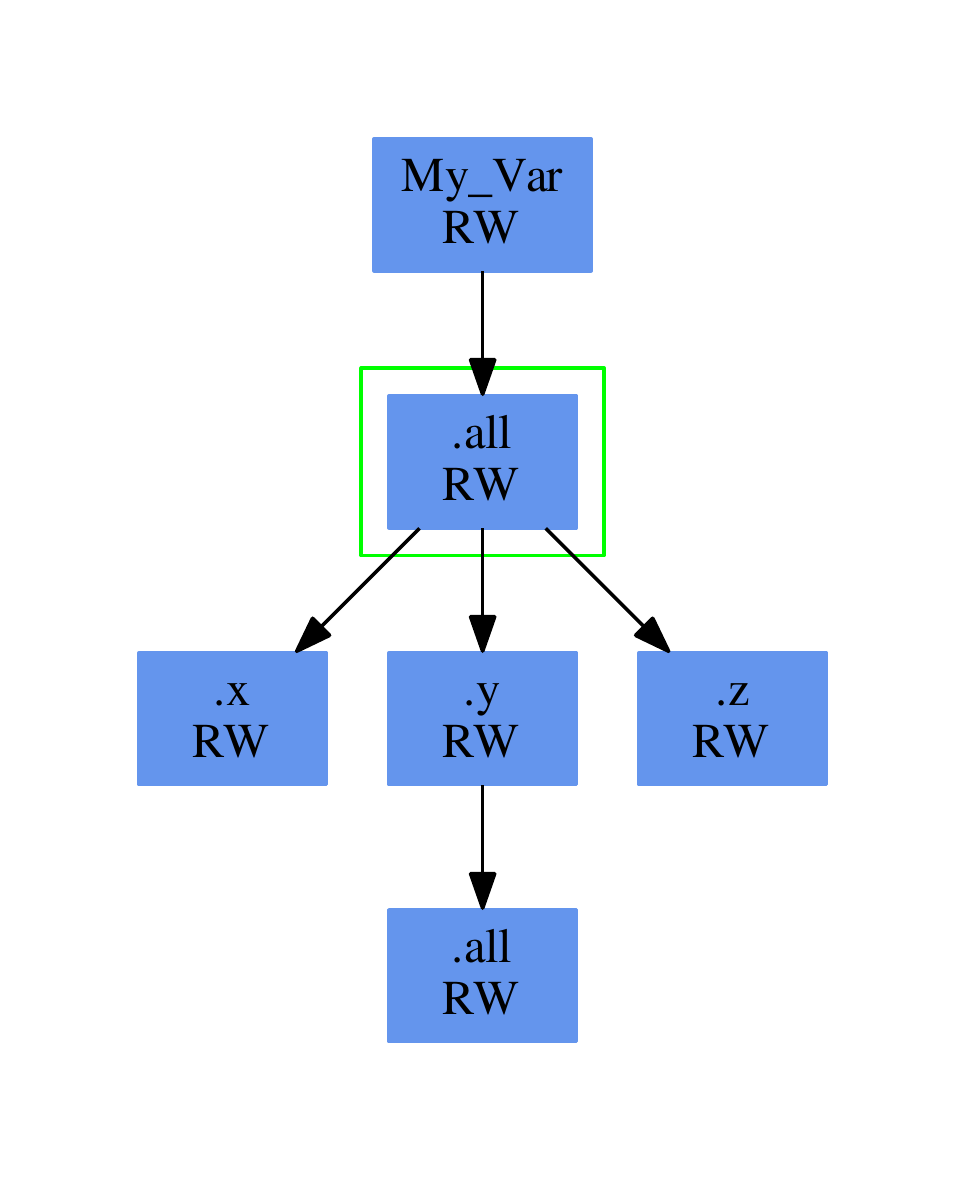}
\centering
\caption{The tree of the left hand side \texttt{My\_Var} after assignment. The \textit{PermRelease} operator changed the permission of \texttt{My\_Var} to \texttt{RW}.}
\label{fig:assignrightafter}
\end{subfigure}
\rulesep
\begin{subfigure}[t]{0.23\textwidth}
\includegraphics[width=\textwidth]{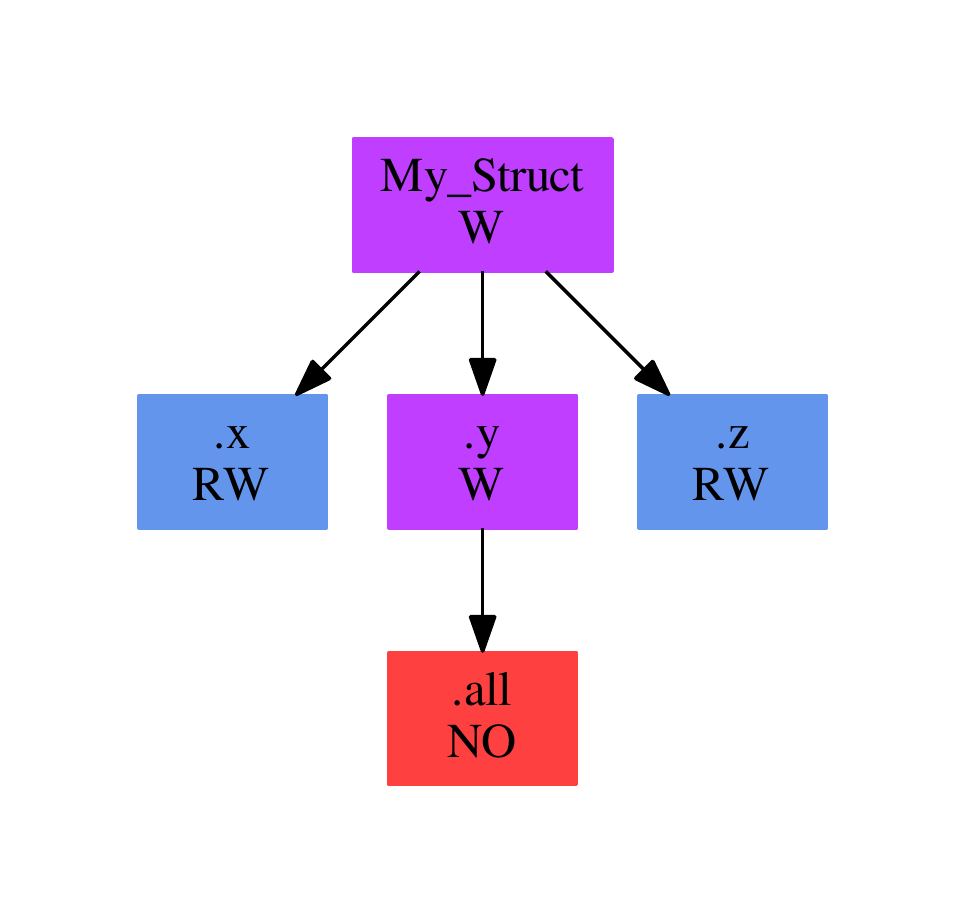}
\centering
\caption{The tree of the right hand side \texttt{My\_Struct} after assignment.}
\label{fig:assignright}
\end{subfigure}
		\caption{Example of the permission rule \textsc{(P-assignDeepName)} applied to the assignment \texttt{My\_Var.all := My\_Struct;}. }
		\label{fig:assignperm}
	\end{figure}

	As for the operational semantics, the rule for procedure calls is also special. Depending on each argument mode, we apply a different rule (either \textit{Observe}, \textit{BorrowIntOut} or \textit{BorrowOut}; the last differs from the previous by accepting write-only permission to the borrowed actual parameter). Then for every \texttt{out} or \texttt{in-out} parameter, they are set to read-write as well as all their extensions, as for assignment. Note that all the changes in permissions during call last only for the duration of the call.
	
	\infrule[P-call]
	{\texttt{procedure} ~P(x_1:\tau_1,~...~,x_n:\tau_n) ~\textnormal{with}~\\ x_1...x_a~ \textnormal{observed},\\ ~ x_{a+1}...x_b~ \textnormal{borrowed with mode \texttt{in}}, \\ ~ x_{b+1}...x_c~ \textnormal{borrowed with mode \texttt{in-out}}, \\~ x_{c+1}...x_n~ \textnormal{borrowed with mode \texttt{out}}, \\ 
		\Phi_1 = \Phi \\ 
		\forall~ 0 < i \leq a, \Phi_i.~e_i \xrightarrow{Observe}_e \Phi_{i+1} \\ 
		\forall~ a < i \leq b, \Phi_i.~e_i \xrightarrow{BorrowInOut}_e \Phi_{i+1} \\
		\forall~ b < i \leq c, \Phi_i.~e_i \xrightarrow{BorrowInOut}_e \Phi_{i+1} \\   
		\forall~ c < i \leq n, \Phi_i.~e_i \xrightarrow{BorrowOut}_e \Phi_{i+1} \\
		\Phi'=\textit{PermRelease}(\Phi[\forall~b < i \leq n: e_i \mapsto  \textit{Pfresh}(\tau_i,\texttt{RW})])
	}
	{\Phi.~P(e_1, ~...~, e_n) \xrightarrow{}_s \Phi'}
	
	\subsection{Correctness of permission rules with respect to the operational semantics}
	
	The anti-aliasing requirement enforces CREW. This means that if an aliased memory node can be written (by assigning its associated path, or any prefix with the same number of indirections), then any other aliased node can be neither written nor read using another path. 
	
	The way we built our memory representation adds to each memory cell the path used to access it, thanks to the tree representation. This allows tracking aliasing very precisely, which is by definition the fact of accessing the same memory cell using two different paths. This gives a straightforward formalization of the main theorem.
	
	\setcounter{theorem}{0}
	\begin{theorem}\textnormal{(No-aliasing)\textbf{.}}
		For every set of nodes $S$ in memory environment $\Upsilon$ such that their memory cells have the same address, consider their associated paths. If there is one path that can be written using the permission environment $\Phi$, then all other paths in $S$ can neither be written nor read.
	\end{theorem}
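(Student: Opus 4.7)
The plan is to prove the theorem as an invariant preservation result, by mutual structural induction on the derivation trees of the operational semantics and the corresponding permission rule derivations. More precisely, I would strengthen the statement into an invariant $I(\Upsilon,\Phi)$: for every two distinct paths $p_1,p_2$ whose associated memory cells in $\Upsilon$ have the same address, if $\Phi(p_1)\in\{\texttt{W},\texttt{RW}\}$ then $\Phi(p_2)=\texttt{NO}$, and if $\Phi(p_1)=\texttt{R}$ then $\Phi(p_2)\in\{\texttt{R},\texttt{NO}\}$. The theorem then follows as an immediate corollary of $I$. The goal is to show that every paired rule $\Upsilon.c\Rightarrow\Upsilon'$ together with $\Phi.c\Rightarrow\Phi'$ preserves $I$, starting from the trivially-satisfying empty environment.

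The base cases and easy inductive cases are straightforward. \textsc{(E-uninitDecl)} paired with its permission counterpart calls \textit{fresh}$(\tau)$ and \textit{Pfresh}$(\tau,\texttt{RW})$: by the definition of the \textit{fresh} oracle the allocated cells are pairwise distinct from all previously allocated cells, so no new aliasing is introduced and the RW permissions granted do not violate $I$. \textsc{(E-assignNull)}, \textsc{(E-assignLiteral)} and \textsc{(E-assignNew)} modify only values or replace a subtree with a freshly allocated one, so aliasing classes are either unchanged or only gain fresh cells, and the permission rules preserve the corresponding constraints. The \textsc{(E-ifCondition...)} cases follow from the induction hypothesis together with the \textit{Fusion} operator taking the greatest lower bound of permissions (which can only lower, never loosen).

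The core cases are \textsc{(E-assignName)}, \textsc{(E-assignAccess)} and \textsc{(E-call)}. For the assignment of a name, the key observation is that \textit{Assign} preserves the cells of Integer and Record nodes, while for Access nodes it transplants the \emph{entire subtree} of the right-hand side into the position reachable from the left-hand side. Hence, after assigning $x \texttt{:=} n$, every path below an access indirection on the $x$-side becomes an alias of the corresponding path on the $n$-side. The matching permission rule \textsc{(P-assignDeepName)} is precisely designed to cover this: it sets the moved path $n$ to \texttt{W}, every deep extension of $n$ with more indirections to \texttt{NO}, and deep extensions at the same indirection level to \texttt{W}; one then checks, alias class by alias class, that the freshly granted \texttt{RW} on the $x$-side is compatible with the \texttt{W}/\texttt{NO} left on the $n$-side. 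The \texttt{'Access} case is symmetric but more aggressive (the entire subtree under $n$ becomes \texttt{NO}), matching the stronger aliasing it creates. For \textsc{(E-call)}, the argument composes: each observed actual merely restricts to \texttt{R} on both sides of the call (compatible with CREW among readers), while borrowed actuals are set to \texttt{NO} for the duration of the call, so the formal parameter holds the sole \texttt{RW} capability on the shared cells. At return, \textit{PermRelease} restores permissions consistently with the unchanged address graph.

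The main obstacle I expect is the \textsc{(E-call)} case, because it combines three delicate ingredients at once: the callee runs in an environment where formal parameters share cells with actuals (via \textit{GetFromExpr} returning $\Upsilon(n)$ rather than a copy), the callee body may itself perform arbitrary nested moves/borrows, and the permission bookkeeping must guarantee that when control returns to the caller, the re-granted \texttt{RW} on borrowed actuals still satisfies $I$ against every other path in the caller. I would factor this out through two auxiliary lemmas: a \textbf{Framing lemma} stating that paths set to \texttt{NO} for the duration of a call remain disjoint in their permission-effects from the callee-visible paths, and a \textbf{Readability lemma} (alluded to in the excerpt) asserting that \texttt{R}/\texttt{RW} permission at a node implies \texttt{R}/\texttt{RW} at every descendant actually present in $\Upsilon$, which is needed to lift the finitary checks on named paths to arbitrary aliased descendants. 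A secondary but more technical difficulty is that permission trees are formally infinite for recursive record types, so the invariant must be quantified over only those paths actually realized in the (finite) memory tree $\Upsilon$; this requires defining a projection of $\Phi$ to the support of $\Upsilon$ before stating and inducting on $I$.
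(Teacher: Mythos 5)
Your plan is essentially the paper's own proof: an induction over the paired operational-semantics and permission-rule derivations, with the same identification of where aliases are created (below the first \textit{Access} node during \textit{Assign}, and at parameter passing via \textit{GetFromExpr}), the same treatment of \textsc{(E-call)} as the delicate case, and the same reliance on the \emph{Readability} lemma. Your auxiliary concerns (framing of \texttt{NO}-permission paths during a call, restricting the invariant to paths realized in the finite memory tree) correspond to what the paper discharges inline in the \textsc{(E-call)} case and via its \emph{Coherence}, \emph{Normalization} and \emph{No-cycle} lemmas, so the proposal is sound and not a genuinely different route.
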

	
	\subsubsection{Lemmas}
	
	Some lemmas are required, so as to prove that an associated path to a memory node always exists. The solution for this is to compare memory and permission trees, and find out that memory trees are a subset of permission trees. This is the \emph{coherence} lemma, whose formalization is given hereafter.
	
	There are other lemmas, used in the proofs, such as the fact that each permission rule leaves the environment normalized (\emph{Normalization} lemma), that if a node has permission read, then all its children nodes also have read permission (\emph{Readability} lemma), and that the memory cannot loop on itself (\emph{No-cycle} lemma).

	\begin{lemma}\textnormal{(Normalization)\textbf{.}}
		Permission environment is normalized. 
	\end{lemma}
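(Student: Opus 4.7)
The plan is to proceed by induction on the derivation of a permission-evolution judgment $\Phi.c \xrightarrow{} \Phi'$, showing that if $\Phi$ is normalized then so is $\Phi'$, where ``normalized'' unfolds to the following structural invariant on each permission tree: for every internal node, its permission equals the greatest lower bound (in the lattice of Figure~\ref{fig:hasse}) of the permissions of its children. The initial environment, constructed purely by $\textit{Pfresh}(\tau,\texttt{RW})$ on variable declarations, assigns \texttt{RW} uniformly to every node and is thus normalized trivially, supplying the base case for the induction on the derivation of the whole program.

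For the inductive step, I would group the rules into three families. First, the rules that end with an explicit application of \textit{PermRelease} (notably \textsc{(P-assignDeepName)} and \textsc{(P-call)}) are immediate: it suffices to observe that \textit{PermRelease} is defined precisely to restore the greatest-lower-bound invariant at every ancestor of the modified nodes, so its output is normalized by construction. Second, the structural rules (block composition, non-deterministic branching) preserve normalization: sequential composition is handled by chaining the induction hypotheses, and the $\texttt{if}$ rule reduces to the fact that the pointwise meet of two normalized trees is normalized, since meet commutes with meet-of-children.

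The third and most delicate family comprises the auxiliary move/borrow/observe micro-rules used to traverse expressions (e.g.\ \textsc{(P-M-ident)} and the propagations described around Figures~\ref{fig:myvarallyaccess}--\ref{fig:assignmyvarally}). Here one has to verify that the specific patterns of updates prescribed by the SPARK RM wordings quoted in the excerpt (``\texttt{NO} on the moved subtree under \texttt{'Access}, \texttt{W} on strict prefixes with the same depth of \texttt{.all}, etc.'') are exactly what is needed for the combined result, after the subsequent \textit{PermRelease} on the assigned or borrowed path, to satisfy the meet invariant. The argument is a case analysis on whether each ancestor lies on the path to the modified node and on how many \texttt{.all} indirections separate it from that node, checking that the demoted permissions do not drop any ancestor below the meet of its children, and that the subsequent \textit{PermRelease} correctly reinstates the invariant where it was temporarily violated.

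The main obstacle I anticipate is the procedure-call rule \textsc{(P-call)}, because several borrow/observe steps are composed sequentially on a shared environment before the final \textit{PermRelease}; one must show that the intermediate environments $\Phi_i$ can be non-normalized without invalidating the argument, i.e.\ that \textit{PermRelease} applied at the end still produces a normalized tree regardless of the order in which the $e_i$ were processed. I would address this by proving a small auxiliary lemma stating that \textit{PermRelease} of an arbitrary permission tree always yields a normalized one (essentially from its recursive definition in Appendix~\ref{sec:PermRelease}), so that normalization of the intermediate $\Phi_i$ is not required, only normalization of the final output, which reduces the call case to the structural lemma about \textit{PermRelease} itself.
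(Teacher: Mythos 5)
Your overall strategy---induction on the permission rules, together with the observation that \textit{PermRelease} applied to an \emph{arbitrary} environment always yields a normalized one, so that every rule ending in \textit{PermRelease} is immediate and only the block and declaration rules need separate treatment---is exactly the paper's argument. Two issues, though.

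First, you unfold ``normalized'' incorrectly. The paper defines a normalized environment as one in which each node's permission is \emph{more permissive than} (i.e.\ $\geq$ in the lattice of Figure~\ref{fig:hasse}) the most restrictive permission of its children, not \emph{equal to} their greatest lower bound. The equality version is false and unprovable: after \textsc{(P-assignDeepName)}, the moved path $n$ ends with permission \texttt{W} while its strict extensions with more \texttt{.all} are set to \texttt{NO}, and \textit{PermRelease} then computes $\texttt{W} \vee \texttt{NO} = \texttt{W} \neq \texttt{NO}$. Indeed \textit{PermRelease} only joins the old permission with the meet of the (released) children, so even its own output fails your equality invariant. With the correct inequality, the output of \textit{PermRelease} is normalized by construction, and that observation is essentially the whole content of the lemma.

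Second, your ``third and most delicate family'' is unnecessary. The move/borrow/observe micro-rules never conclude a statement-level judgment on their own; they occur only as premises of the \textsc{(P-assign\ldots)} and \textsc{(P-call)} rules, all of which finish with \textit{PermRelease}. The point you make for \textsc{(P-call)}---that intermediate environments need not be normalized because the final \textit{PermRelease} repairs everything---applies uniformly to all these micro-rules, so no case analysis on ancestors and \texttt{.all} counts is required. The paper's proof is precisely this one-liner, plus the remarks that \textsc{(P-block)} just chains the induction hypothesis and that the declaration rules build uniform-permission trees via \textit{Pfresh}, which are trivially normalized.
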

	\begin{proof}
		Straightforward. Every rule of the semantics calls the normalization operator \textit{PermRelease}. The rules that do not call it are \textsc{(P-block)} (trivial case), and declaration rules (they create only new trees with one node).
	\end{proof}
	
	\begin{lemma}\textnormal{(Coherence)\textbf{.}}
		For every node $n$ in the memory environment, it is possible to find an associated node $n'$ in the permission environment, such that $n$ and $n'$ designate the same path.
	\end{lemma}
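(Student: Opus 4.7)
The plan is to proceed by induction on the derivation of the operational semantics, showing that Coherence is an invariant preserved by every semantic step, assuming a matching permission derivation is taken in parallel. Formally, I would strengthen the statement to a simultaneous invariant: for every reachable pair $(\Upsilon,\Phi)$, the domain of $\Upsilon$ is a subset of the domain of $\Phi$ when both are viewed as partial maps from paths to cells/permissions, and for each path $p$ in the memory, the node shapes agree (an $Integer$ memory node corresponds to an integer permission node, a $Record$ to a record permission node with matching field names, and an $Access$ node to an access permission node, with the subtree under an $Access$ recursively coherent when the pointer is not $Null$). This inclusion form is what captures the intuition stated in the paper that memory trees are finite subsets of the (potentially infinite) permission trees.

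The base case is the initial empty environment, which is trivially coherent. For the inductive step I would go rule by rule on the statement and declaration semantics and pair each with its permission-rule counterpart. The declaration rule \textsc{(E-uninitDecl)} allocates $\textit{fresh}(\tau)$, which by construction produces the same tree shape as $\textit{Pfresh}(\tau,\cdot)$ truncated at access nodes (which are initialized to $Null$); hence coherence is preserved. The assignment rules \textsc{(E-assignNull)}, \textsc{(E-assignLiteral)}, \textsc{(E-assignNew)}, \textsc{(E-assignName)}, and \textsc{(E-assignAccess)} only modify values or replace subtrees at already-present nodes via $\textit{Assign}$, so the set of memory paths either stays the same (for value updates) or grows along an access node whose permission subtree already exists (for $\texttt{new}$ and $\texttt{'Access}$); in each case the corresponding P-rule touches exactly the same paths. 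The control rules \textsc{(E-block)}, \textsc{(E-ifConditionTrue)}, \textsc{(E-ifConditionFalse)} are immediate from the inductive hypothesis applied to sub-derivations, noting that the merge by greatest lower bound on permissions in the if-join does not shrink the domain of $\Phi$.

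The main obstacle will be the procedure call case \textsc{(E-call)} paired with \textsc{(P-call)}. Here we must track three simultaneous things: (i) that the freshly built formal memory trees $\textit{GetFromExpr}_{\Upsilon}(e_i)$ are coherent with the fresh permission trees the callee sees for $x_i$ (for $\texttt{'Access}$ arguments the memory grafts the actual's subtree onto a fresh access cell, which must match the permission construction for borrows, namely a fresh access node on top of the existing subtree); (ii) that by the induction hypothesis applied to the callee derivation, coherence is maintained inside the procedure body; and (iii) that the copy-back phase via $\textit{SetFromExpr}$ and $\textit{Assign}$ restores coherence in the caller, which uses the fact that \textsc{(P-call)} re-establishes $\texttt{RW}$ on the borrowed paths and applies $\textit{PermRelease}$. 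The most delicate sub-case is the \texttt{in} mode access-to-variable argument, where the formal is an access cell pointing into the caller's memory; here one has to verify that the permission environment's borrow of the prefix with $\texttt{NO}$ on the caller side still leaves the path nominally present in $\Phi$, so the memory path remains covered by a permission path. Once these bookkeeping details are checked, the invariant propagates through, and the Coherence lemma follows by straightforward induction on the top-level program derivation.
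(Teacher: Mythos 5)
Your proposal is correct and rests on the same core observation as the paper's proof: memory trees and permission trees share the same constructors, and every memory tree is a permission tree truncated at access nodes holding \texttt{Null}. The paper states this correspondence in two sentences without any rule-by-rule induction; your version simply makes the implicit invariant explicit and verifies its preservation case by case (including the delicate \textsc{(E-call)} pairing), which is a more careful elaboration of the same argument rather than a genuinely different route.
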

	\begin{proof}
		Every permission rule manipulates permission trees that have the same constructs as memory trees, except for access types that cannot be null. Hence every memory tree can be obtained from a permission tree by cutting nodes at access nodes when the pointer has \texttt{null} value.
	\end{proof}

	\begin{lemma}\textnormal{(Readability)\textbf{.}} 
		If a node has permission \texttt{R} (resp. \texttt{RW}), then all its children have permission \texttt{R} (resp. \texttt{RW}) at each step of the semantics.
	\end{lemma}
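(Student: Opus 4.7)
The proof will go by induction on the length of the permission-semantics derivation, showing that the stated property is an invariant preserved by every rule. Let me call a permission environment \emph{readability-closed} when it satisfies the conclusion of the lemma (for both the \texttt{R} and \texttt{RW} cases simultaneously). The plan is to establish readability-closure as an invariant of $\Phi$ throughout execution.

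First I would handle the base case. The initial environment is empty and thus vacuously closed. The only rules that introduce genuinely new trees are the declaration rules, which build trees via \textit{Pfresh}. A direct structural induction on the type $\tau$ shows that $\textit{Pfresh}(\tau,p)$ assigns the same permission $p$ to every node, so readability-closure holds at construction. Using the Normalization lemma we may also assume $\Phi$ is normalized at every step, which rules out pathological intermediate states.

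Next I would do the inductive step by case analysis on the last rule fired. The easy cases are the control-flow rules: \textsc{(P-block)} is immediate by the inner induction hypothesis; the two \textsc{(P-if\dots)} rules merge environments by taking the greatest lower bound pointwise, and one checks that since $\text{glb}$ on the lattice preserves the ordering ``at least as readable as'', merging two readability-closed environments yields a readability-closed one. The declaration and literal/null/new assignment rules touch only fresh subtrees, so they reduce to the base case above. The interesting cases are the move, borrow, observe, and \textsc{(P-assignDeepName)} rules, together with the effect of \textit{PermRelease}. For each of these I would inspect the wording of the rule and verify the invariant on the affected subtree: the move rules with and without \texttt{'Access} uniformly set entire sub-families of nodes to \texttt{NO} or \texttt{W} and thus leave no node with \texttt{R} or \texttt{RW} having a forbidden child; the borrow/observe rules in \textsc{(P-call)} set a whole subtree uniformly to \texttt{NO} or \texttt{R} (or \texttt{RW} after the call), which obviously preserves closure; and the assignment rule sets the assigned path and \emph{all} its extensions to \texttt{RW}, so inside that subtree the invariant holds trivially.

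The one genuinely delicate step, and the step I expect to be the main obstacle, is the \textit{PermRelease} propagation along the prefixes of an updated path. Here a strict prefix gets its permission raised to the greatest lower bound of its children's permissions. I would need to argue that if a prefix receives permission \texttt{R} or \texttt{RW} from this operation, then every one of its children (not only those on the updated branch) already had that permission. For the updated branch this holds by construction. For sibling branches it holds because (i) $\Phi$ was readability-closed before the rule, so the prefix's old permission already bounded its siblings from below in the readable direction, and (ii) \textit{PermRelease} takes the glb over \emph{all} descendants, so the new permission cannot exceed any sibling's permission. Combining these two facts gives the desired preservation. Once this case is discharged, the induction closes and the lemma follows.
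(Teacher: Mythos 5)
Your proof follows the same route as the paper's: induction on the semantics with a case analysis on the last rule fired, using the uniformity of \textit{Pfresh} for freshly built subtrees, the Normalization lemma, and a direct inspection of the move/borrow/observe rules. You actually go further than the paper's own (very terse) proof, which never mentions \textit{Fusion} or \textit{PermRelease}; your glb argument for the if-rules is correct and fills a real omission.

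The one point that would fail as written is your justification (ii) for the \textit{PermRelease} step. The operator does not replace a prefix's permission by the glb of its children: it computes $\kappa \vee \bigl(\bigwedge_{i} \textit{PermRelease}(F_i).Permission\bigr)$, a \emph{join} with the node's existing permission $\kappa$. It therefore only ever raises permissions, and the result can strictly exceed a child's permission whenever $\kappa$ already did (e.g.\ $\kappa = \texttt{R}$ with one child at \texttt{NO} yields \texttt{R}), so the claim that the new permission ``cannot exceed any sibling's permission'' is false in general. What actually preserves the invariant is that every rule demoting a node below readability also demotes all of that node's strict prefixes to \texttt{W} or \texttt{NO} beforehand (this is exactly what the premises \textsc{(P-M-ident)}, \textsc{(P-M-field)}, \textsc{(P-M-deref)} and their \textsc{P-SM} counterparts do), so no prefix reaches \textit{PermRelease} with a readable $\kappa$ while owning an unreadable child; and on untouched, already-normalized subtrees \textit{PermRelease} is the identity since there $\kappa$ already dominates the glb of the children. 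Substituting this observation for your point (ii) closes the induction.
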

	
	\begin{proof}
		By induction on the semantics: 
		\begin{enumerate}
			\item \textsc{(E-assignNull)}, \textsc{(E-assignLiteral)}: $\textit{Pfresh}(\tau, \texttt{RW})$ guarantees it for the assigned path.
			\item \textsc{(E-assignName)}, \textsc{(E-assignAccess)}, \textsc{(E-assignNew)}: same thing for the assigned path. For the moved path, every prefix gets its permission set to either \texttt{W} or \texttt{NO} by the \textsc{(P-M-ident)}, \textsc{(P-M-field)}, \textsc{(P-M-deref)} or \textsc{(P-SM-ident)}, \textsc{(P-SM-field)}, \textsc{(P-SM-deref)} rules. The extensions are handled by \textsc{(P-assignDeepName)}, \textsc{(P-assignShallowName)}, \textsc{(P-assignAccess)}, and \textsc{(P-assignNew)}.
			\item \textsc{(E-call)}: we create such nodes by observing their paths. The rules \textsc{(P-O-nullValue)}, \textsc{(P-O-litteral)}, \textsc{(P-O-takeAccess)}, \textsc{(P-O-name)} guarantee that if any node is set to permission \texttt{R}, then any extension of it is also set to \texttt{R}. After the callee returned, the proof is identical to assignments.
			\item \textsc{(E-block)}, \textsc{(E-ifConditionTrue)}, \textsc{(E-ifConditionFalse)}, \textsc{(E-uninitDecl)}, \textsc{(E-procedureDecl)}: trivial
		\end{enumerate}
	\end{proof}
	
	\begin{lemma}\textnormal{(No-cycle)\textbf{.}} 
		A node cannot have the same memory cell as any of its descendants (hence memory trees are finite).
	\end{lemma}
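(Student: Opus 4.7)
The plan is to proceed by induction on the length of the derivation of the operational semantics, with the initial memory environment (empty, or populated only by \texttt{fresh} allocations) as the base case. The claim is vacuous on \textit{fresh} outputs, since by definition \textit{fresh} produces memory cells not yet used anywhere in $\Upsilon$. The easy inductive cases are \textsc{(E-uninitDecl)}, \textsc{(E-assignNull)}, \textsc{(E-assignLiteral)}, \textsc{(E-assignNew)}, which either allocate entirely new cells or only alter a single value, and the purely structural rules \textsc{(E-block)}, \textsc{(E-ifConditionTrue)}, \textsc{(E-ifConditionFalse)}, \textsc{(E-procedureDecl)}, which follow directly from the induction hypothesis on sub-derivations.

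The interesting cases are \textsc{(E-assignName)}, \textsc{(E-assignAccess)} and \textsc{(E-call)}, where a subtree drawn from elsewhere in $\Upsilon$ is grafted into the target location and could in principle share cells with ancestors of that location, creating a cycle. I would handle \textsc{(E-assignAccess)} first, as the prototypical case. Suppose $x.all \texttt{:= } n\texttt{'Access}$ produces a cycle; then by the \emph{Coherence} lemma the path $n$ lies at or below $x$ in the permission tree for the corresponding declaration. But the associated permission rule fires \textsc{(P-M-ident)}/\textsc{(P-M-field)}/\textsc{(P-M-deref)} on $n$ under the move-with-\texttt{Access} semantics, which sets every strict prefix of $n$ sharing its number of dereferences to \texttt{NO} and every other strict prefix to \texttt{W}. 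The subsequent check that the assigned path $x.all$ still has permission in $\{\texttt{W},\texttt{RW}\}$ would then fail, contradicting the premise that the derivation went through.

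The case \textsc{(E-assignName)} uses the same template: the recursive \textit{Assign} operator preserves the cells of the left-hand side at every \textit{Integer} and \textit{Record} node and grafts the pointed-to subtree at every \textit{Access} node. A new cycle could only appear through a grafted pointer whose subtree contains a cell already appearing under $x$; but such a configuration requires $n$ or one of its descendants to coincide with a path under $x$, and exactly as above the move rules for $n$ would have stripped permission from a prefix of $x.all$, blocking the assignment. For \textsc{(E-call)} the argument decomposes: \textit{GetFromExpr} for names literally reuses $\Upsilon(n)$, so the callee's formals initially share cells with the caller, but the \textsc{(P-call)} rule applies \textit{BorrowInOut}/\textit{BorrowOut}/\textit{Observe} sequentially on the actuals, which enforces pairwise disjointness of the formals' reachable cells and disjointness from the caller's surviving paths; the callee body then preserves no-cycle by the induction hypothesis on its sub-derivation, and the returning \textit{Assign} steps reduce to the \textsc{(E-assignName)} case.

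The main obstacle is the bridge between permission-level reasoning (manipulations of $\Phi$) and memory-level identity of cells (in $\Upsilon$). \emph{Coherence} supplies the backbone: every memory node corresponds to a permission node on the same path, so a hypothetical cycle in $\Upsilon$ yields a configuration in $\Phi$ on which one can invoke the algebraic behaviour of the move/borrow operators. The \emph{Readability} lemma further lets us lift statements about permissions at ancestor nodes to statements about all their descendants, which is what makes the ``strict prefix of $n$ loses permission'' step effective against an $x$ that is not literally the target path but an ancestor of it. Once this dictionary is in place, the individual case analyses are short; packaging the argument compactly is the only real effort.
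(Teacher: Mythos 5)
Your overall strategy matches the paper's: induction on the semantics, the same split into trivial cases (fresh allocation, value-only updates, structural rules) and the three grafting cases, and the same idea of refuting a hypothetical cycle by showing the permission derivation could not have succeeded. But the key case analysis is reversed, and that reversal breaks the argument. For \textsc{(E-assignAccess)} you claim a cycle forces ``$n$ at or below $x$'' and then refute it via the prefix-weakening clauses of the move rules. In that configuration no cycle is actually created: replacing $x.\texttt{all}$ by the subtree of a strict \emph{descendant} $n$ discards the intermediate nodes and, by the induction hypothesis, reuses no cell of an ancestor of $x$. Moreover your refutation does not close even on its own terms: a strict prefix of $n$ with fewer dereferences is set to \texttt{W} by the move-with-\texttt{Access} rules, and \texttt{W} \emph{passes} the $\Phi''(x) = \texttt{W,RW}$ check. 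The genuinely dangerous configuration is the opposite one --- the assigned path lies strictly inside the subtree being moved or addressed, as in \texttt{Tree.left.all := Tree} --- and the clause that blocks it is not the prefix rule but the explicit premise of \textsc{(P-assignAccess)} / \textsc{(P-assignDeepName)} that sets every \emph{strict extension} of $n$ (with more \texttt{.all}, in the name case) to \texttt{NO} \emph{before} the \texttt{W,RW} check on the assigned path is performed. The same correction applies to your \textsc{(E-assignName)} case, where again it is the extension-clearing clause, not ``stripping permission from a prefix of $x.\texttt{all}$'', that makes the check fail.

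Two smaller points. The check in \textsc{(P-assignAccess)} is on $x$, not on $x.\texttt{all}$, and the order ``move first, check the assigned path second'' is precisely the design decision the paper highlights as what prevents cycles; your write-up should make that ordering explicit rather than leaving it inside ``the derivation went through.'' For \textsc{(E-call)} your appeal to pairwise disjointness of the formals' reachable cells imports the content of the No-aliasing theorem, which is more than is needed here; the paper's observation is simply that borrows and observes do not modify addresses before the call, and the copy-back on return reduces to the assignment cases.
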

	
	\begin{proof}
		By induction on the semantics: 
		\begin{enumerate}
			\item \textsc{(E-assignNull)}, \textsc{(E-assignLiteral)}, \textsc{(E-assignNew)}: given that those rules only cut memory trees, there are no new indirections created.
			\item \textsc{(E-assignName)}: such a cycle could only be created when applying $Assign$ on a node $Access(C1, \_)$ which is a descendant of the node being moved $Access(\_, P)$. In such a case, this would contradict hypothesis of the rule \textsc{(P-assignDeepName)}, given that we set  descendants of the moved node (ie strict extensions) to \texttt{NO}, before checking \texttt{RW} for the node to assign.
			\item \textsc{(E-assignAccess)}: such a cycle could only be created if the assigned $Access$ node is a descendant of the subtree we are taking address of. In such a case, this would contradict hypothesis of the rule \textsc{(P-assignAccess)}, given that we set  descendants of the moved node (ie strict extensions) to \texttt{NO}, before checking \texttt{RW} for the node to assign.
			\item \textsc{(E-call)}: borrows and observes do not modify addresses before transferring to callee. After returning, we assign every parameter. Hence identical as \textsc{(E-assign)} on each \texttt{in-out} or \texttt{out} parameter. For every \texttt{in} parameter, \textit{SetFromExpr} is equivalent to either assigning $n.all$ or $n$.
			\item \textsc{(E-block)}, \textsc{(E-procedureDecl)}: trivial by applying the induction hypothesis successively to each statement of the block.
			\item \textsc{(E-ifConditionTrue)}, \textsc{(E-ifConditionFalse)}: the $Fusion$ operator does not change memory places and pointers, only permissions. 
			\item \textsc{(E-uninitDecl)}: every node is assigned to a new memory area (and pointers to null). Hence it is impossible to point to an existing memory area.
		\end{enumerate} 
		
	\end{proof}

	\subsubsection{No-aliasing proof}

\setcounter{theorem}{0}
\begin{theorem}\textnormal{(No-aliasing)\textbf{.}}
	For every set of nodes $S$ in memory environment $\Upsilon$ such that their memory cells have the same address, consider their associated paths. If there is one path that can be written using the permission environment $\Phi$, then all other paths in $S$ can neither be written nor read.
\end{theorem}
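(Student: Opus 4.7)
The plan is to prove the theorem by structural induction on the derivation of the operational semantics $\Upsilon.c \Rightarrow \Upsilon'$, with a joint induction on the permission derivation $\Phi.c \Rightarrow \Phi'$. The induction hypothesis is precisely the statement of the theorem, so I will carry it as an invariant along execution. The \emph{Coherence} lemma guarantees that each memory node in $\Upsilon$ has an associated path in $\Phi$, so ``writable'' and ``readable'' questions can always be asked of the permission environment; the \emph{No-cycle} lemma lets me argue inductively on finite paths without worrying about cyclic descendants; and \emph{Readability} lets me reduce statements about extensions of a path to statements about the root of its subtree.

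The base case is trivial: the initial empty environment has no nodes, and each \textsc{(E-uninitDecl)} call invokes $\textit{fresh}(\tau)$, which by construction uses the $\textit{fresh}$ oracle to allocate brand-new, never-before-seen cells, so no two nodes in $\Upsilon$ share an address after a declaration. For the inductive step I would dispatch by the last rule used. The rules \textsc{(E-assignNull)}, \textsc{(E-assignLiteral)}, \textsc{(E-assignNew)} only overwrite values or allocate fresh cells and thus create no new sharing; the \textsc{(E-block)}, \textsc{(E-ifConditionTrue)}, \textsc{(E-ifConditionFalse)}, \textsc{(E-procedureDecl)} rules are purely compositional, noting that the $\textit{Fusion}$ operator at the end of an if-statement only takes the greatest lower bound of permissions (so the invariant, which is a ``NO/R'' statement, is preserved by decreasing permissions on both branches).

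The essential cases are \textsc{(E-assignName)}, \textsc{(E-assignAccess)}, and \textsc{(E-call)}, because these are the rules that genuinely create aliasing between previously disjoint memory cells. For \textsc{(E-assignName)} applied to $x \texttt{:=} n$, the operator $\textit{Assign}$ preserves cell $C$ of the left-hand node but \emph{shares} the pointee subtree $P$ of every $Access$ descendant of $n$ with the left-hand side. The corresponding permission rule \textsc{(P-assignDeepName)} compensates: on the moved path, every strict deep extension with an extra indirection is set to \texttt{NO} and every same-indirection deep extension is set to \texttt{W}, while the assigned path gets \texttt{RW}. I would check that this is exactly the needed bookkeeping: for any new pair of aliased nodes $(m, m')$ created by the assignment, one of them sits below the moved path at a strictly deeper indirection (hence \texttt{NO}) and the other below the assigned path (hence \texttt{RW}). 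The \textsc{(E-assignAccess)} case is similar, using \textsc{(P-assignAccess)} which sends strict prefixes of the same indirection count to \texttt{NO}. For \textsc{(E-call)}, I would decompose: the $\textit{GetFromExpr}$ step only introduces sharing through $Access(\textit{fresh}, \Upsilon(n))$, matched on the permission side by $\textit{BorrowInOut}$/$\textit{BorrowOut}$ setting the actual path to \texttt{NO} for the duration of the call and the callee's formal parameter to \texttt{RW}; for \texttt{in} parameters the observer rules set both to at most \texttt{R}, which respects CREW. After the callee returns, $\textit{SetFromExpr}$ and $\textit{Assign}$ restore the caller's ownership exactly in the same shape as the assignment rules, so that case reduces to the assignment analysis.

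The main obstacle I expect is the \textsc{(E-call)} case, for two reasons. First, the statement must be maintained throughout the callee's execution, not just before and after the call: while the callee runs, the actual parameter cells in the caller are aliased with the formal parameter cells, and the invariant must hold for the merged environment. The borrow rules prevent this from being a problem because they move the actual's permission to \texttt{NO}, but verifying this crisply requires framing the theorem over the entire memory-plus-callee-stack environment and relying on \emph{Readability} to propagate \texttt{NO} down the borrowed subtree. Second, the treatment of \texttt{in} parameters of type access-to-variable through $\textit{SetFromExpr}$ is subtle, because the actual parameter expression is not required to be a name, and one has to chase the two sub-cases ($n$ versus $n\texttt{'Access}$) carefully against the corresponding borrow rules. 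Once these points are handled, the induction closes and the theorem follows, since at every reachable configuration the permission environment forbids, for any aliased pair, the simultaneous presence of write on one side and any permission on the other.
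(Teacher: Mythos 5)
Your proposal matches the paper's proof in essentially every respect: induction over the operational semantics with a case split on the last rule, reliance on the Coherence, No-cycle and Readability lemmas (plus, implicitly, Normalization for the ancestor argument in the trivial assignment cases), identification of \textsc{(E-assignName)}, \textsc{(E-assignAccess)} and \textsc{(E-call)} as the only alias-creating rules, and the same pairing of each memory-level sharing with the permission rule that demotes the other side to \texttt{NO}/\texttt{W}/\texttt{R}. The obstacles you flag for \textsc{(E-call)} (maintaining the invariant while the callee runs, and the \texttt{in} access-to-variable parameters via \textit{SetFromExpr}) are exactly the points the paper's proof addresses, so this is the same argument.
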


	The main theorem is proved by induction on the operational semantic. This leads us to consider each semantic rule one by one and show that the invariant holds. As an example, let us consider the proof for \textsc{(E-assignName)} and \textsc{(E-call)}. The full proof is given in Appendix~\ref{sec:Proof}.
	
	\paragraph{\textsc{(E-assignName)}} Let us take any set $S$ in the environments $\Upsilon',\Phi'$ after executing one step of the semantics. Nodes can only get aliased between the first \textit{Access} node descendant of assigned and moved node (because values are recursively copied, up to the first \textit{Access} node encountered in which the pointer is copied, which creates an alias). 
	
	We can consider only the case when the moved or assigned path $y$ is the one that can be written. Indeed, by \textit{Readability}, $y$ cannot be \texttt{R} in $\Phi$. Hence it can only be \texttt{NO}, which leads to a contradiction. If $y$ cannot be written nor read, then its permission is \texttt{NO} before executing the semantics. By normalization of $\Phi$, we can say that the glb of its children is also \texttt{NO} (and recursively). Which contradicts \texttt{RW} permission for $x$.
	
	Executing one step of the semantics creates an alias in $x$ (the assigned path) for every node of any subtree rooted under an $Access$ node that is a direct descendant of $n$. However, the rule \textsc{(P-assignDeepName)} says that when encountering an $Access$ node, all the further descendants are set to \texttt{NO}, hence their paths cannot be written. The rules \textsc{(P-M-ident)}, \textsc{(P-M-field)}, \textsc{(P-M-deref)} guarantee also that any extension of $n$ cannot be read anymore. This solves the case for elements of $S$ that are descendant of $x$ with different number of \texttt{.all}. 
	
	For elements of $S$ that are descendant of $x$ with same number of \texttt{.all}, no alias is created during assignment, hence induction hypothesis can be applied directly. 
	
	Hence, the assigned path gets \texttt{RW} permission (as well as any extensions), but any aliased subtree of this node cannot be written nor read in the moved tree, hence only one aliased path be written, and all others can neither be written nor read.
	
	\paragraph{\textsc{(E-call)}}
	
	The case for procedure calls is particular: indeed we apply our inductive hypothesis after modifying environments, hence we have to guarantee that the invariants not only hold at the end of the rule, but also at the moment we transfer the control flow to the callee (ie for $\Upsilon'$ in \textsc{(P-call)} and $\Phi'$ in \textsc{(P-procedureDecl)}).
	
	Let us consider such a set $S$ in $\Upsilon'$ (at the moment transferring to the callee). Suppose that one element of $S$ can be written. Then the rule \textsc{(P-procedureDecl)} guarantee that this element is borrowed (it is a borrow of a subtree $T$ of $\Upsilon$). Hence the rules \textsc{(P-B-name)}, \textsc{(P-B-name-Out)}, \textsc{(P-B-takeAccess)}, \textsc{(P-B-nullValue)}, guarantee that the borrowed subtree has permission set to \texttt{NO} if deep, and \texttt{R} if shallow. Note that the borrowed subtree cannot have been observed, given that we borrow after observing, hence we require \texttt{RW} permission on an argument that has been set to \texttt{R} by observation. Thus, it is impossible for $T$ to be observed, hence creating another element of $S$ that could be readable. For the same reason, the hypothesis of the rules \textsc{(P-B-entryPointInOut)} and  \textsc{(P-B-entryPointOut)} guarantee that the argument cannot be borrowed twice. Hence if one path can be written, all others cannot be neither written nor read.
	
	\section{Implementation and results}
	
	\subsection{Laziness of permission trees}
	
	The implementation of the permission rules is done in Ada. They are implemented as a separate module of the GNAT Pro compiler (more precisely in the front-end) and the analysis procedure is invoked from SPARK analyzer when called with the special flag \texttt{-gnatdF}. 
	
	The implementation is 6200 lines long. It involves dynamically allocated tree data structures that are used to implement the permission trees. However, given that those trees may be infinite, we decided to proceed with a lazy implementation of permission trees, with a special dethunking method. Indeed, the maximum depth at which those permission trees are used is the maximum number of lexemes of a path, which is always finite, and in practice less than 7. Moreover, the AST does not allow easy iteration on extensions of paths, hence those trees have to be built on the fly, leaving undefined many parameters.
	
	The definition of permission trees is hence modified to accept arrays, as well as fields to records that may not be referenced by the original definition, but may be added by object oriented programming (class-wide or incomplete types). Note that the $Thunk$ node may represent both a leaf and an unevaluated internal node.
	
	$$ \begin{array}{lcl}
	P & ::= & Thunk(Permission, Is\_Node\_Deep, Children\_Permission) \\
	& | & Record(Permission, Is\_Node\_Deep, Fields \rightarrow P, P) \\
	& | & Access(Permission, Is\_Node\_Deep, P) \\
	& | & Array(Permission, Is\_Node\_Deep, P) \\
	& & \\
	\end{array} $$
	
	This creates some approximations in our implementation, specially when setting the permissions to every extension in our permission rules, given that it is not possible to iterate over extensions. The exact implementation dethunks the tree depending on the type of the node, except for class-wide or incomplete types, that are replaced by an over-approximation. Similarly, when the permission changes for the whole subtree (such as assigning to a node), then the subtree is deallocated and replaced by a $Thunk$ node.
	
	\subsection{Complete SPARK}
	\label{sec:completeSPARK}
	
	\begin{wrapfigure}{r}{0.35\textwidth}
		\centering
		\vspace{-1cm}
		\includegraphics[width=.35\textwidth]{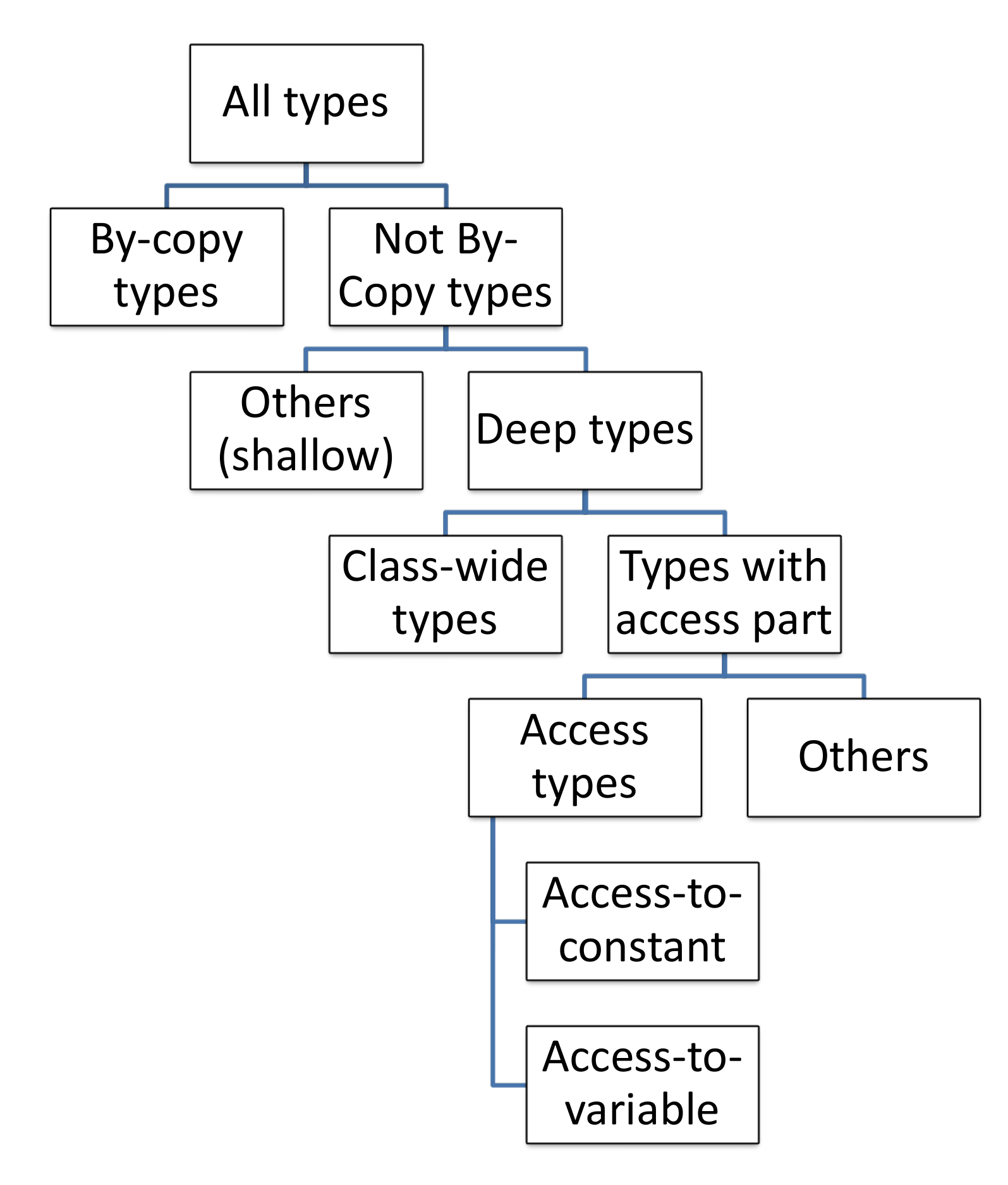}
		\caption{The different deep types in full Ada. Access types can be added the modifier \texttt{constant} that will prevent any modification to the pointed value. Any other access type is called access-to-variable.}
		\label{fig:deeptypes}
	\end{wrapfigure}

	The rules presented in this report only address a subset of SPARK. However, during the internship, the rules as well as the implementation target the complete SPARK added with access types. Complete SPARK differs from {\muSPARK} on several points.
	
	As evoked before, SPARK has arrays. We apply our permission rule to all elements, without taking into account the exact index of that element. That means when assigning to an element of an array, there are no effects in terms of permissions, given that it is not known which element has been assigned. In section~\ref{sec:Exemples}, we present a method to have an iterator over an array of pointers.
	
	Besides procedures, SPARK has functions, that can return values and whose calls are expressions (instead of statements for procedures). Note that functions in SPARK can only have parameters of mode \texttt{in} and cannot have side effects. To handle them safely, we add the rule that every formal parameter has read-only permission (observing). Moreover, the expression returned is considered to be moved. This constructs allows constructors (destructors being procedures), but not accessors.\footnote{However Ada has a mechanism called \emph{renaming} that allows renaming a path to another, acting like an accessor to one specific path.}
	
	Types for which passing to procedure as parameter can cause aliasing problems are not the same as the ones that can cause aliasing problems when being assigned. Actually deep types only designate the latter, the former being already defined in Ada as \emph{not-by copy} types (see Figure~\ref{fig:deeptypes}). Indeed, parameters to procedure are passed by-copy when they are shallow and can fit in a standard machine register. Other parameters \textbf{may} be passed by reference if shallow (compiler dependent) or \textbf{are} passed by reference if deep. Hence, the rules for complete SPARK replace deep by \emph{not by-copy} in observes and borrows, while deep is kept for moves.
	
	Many features from OOP have been ignored in this report, such as class-wide types. Some approximations are done to keep the safety of the analysis. We consider all class-wide types to be deep, by just ignoring their content, and considering them as always aliasing types.
	
	Loops in Ada can be either finite or infinite. Exit statements allow exiting any enclosing loop. Thus, we require that the permission of each path at the exit point of a loop (exit statement or end of finite loop), has to be less restrictive than at entry. This is enforced by an hashtable that associates each loop id to two permission environments, the one at entry and an accumulator that merges every environment at each exit point of the loop. 
	
	Global variables are considered as implicit parameters to the most enclosing procedure, all the more since SPARK requires specifying the mode of each used global in a procedure. Thus, they obey to the same rules as formal parameters and actual parameters. When calling a procedure, the caller has to ensure that every global variable used in the procedure has adequate permissions with respect to specified mode. 
	
	Packages in Ada can have elaboration code that is executed when being loaded from outside. Any initialized global declaration in Ada is implicitly rewritten as initialization code. This feature is very useful for interacting with hardware that needs to be initialized before any call to the library. In our analysis, we treat this code as a procedure that has as \texttt{out} parameter every stateful global variable of the package.

	\subsection{Test suites}
	
	The analysis has first been tested on two test suites. The first, called \emph{fixedbugs}, contains 17041 tests for all bugs fixed and features added in GNAT Pro compiler. The second is called \emph{acats-4} (Ada Conformity Assessment Test Suite),\footnote{They can be downloaded from http://www.ada-auth.org/acats-files/4.1/ACATS41.ZIP} and has 3905 standardized tests that every Ada compiler must pass. Those test suites must be passed in order to show that the new features implemented in the front-end do not break the existing compiler architecture. However, given that they do not contain any SPARK code, they do not allow assessing the efficiency of our rules on existing code base.
	
	The most interesting test suite is \emph{spark2014}, specific to SPARK. This suite has 2087 tests with valid SPARK code, and our analysis has only 30 regressions, almost all of them being caused by class-wide global variables that are manipulated with finer graining than the analysis is able to handle.
	
	Finally, a small test suite written by the author has been used to check that the different constructs act accordingly to the anti-aliasing rules. These tests are inspired by the examples given in Rust borrow-checker README file.\footnote{https://github.com/rust-lang/rust/blob/master/src/librustc\_borrowck/borrowck/README.md} 
	
	\subsection{Some use cases}
	\label{sec:Exemples}
	
	\subsubsection{Swap}
	\label{sec:Exswap}
	The first example is the swap procedure, whose naive implementation gets accepted by our rules. Note that there is no way of implementing a swap function in Rust. In Ada, we take advantage of \texttt{in-out} mode, that guarantees that after the procedure call, the \texttt{in-out} actual parameter to be assigned is exactly at the same address than the one being sent to the callee.
	\begin{lstlisting}[style=spark]
procedure Swap (X, Y:in out T) is
	Temp :T :=Y;  -- Move Y. X:RW, Y:W, Temp:RW
begin
	Y:=X;         -- Move X. X:W, Y:RW, Temp:RW
	X:=Temp;      -- Move Temp. X:RW, Y:RW, Temp:W
end Swap;       -- Both borrowed arguments X and Y have RW permission.
\end{lstlisting}
	\subsubsection{Iterator}
	The second example shows how to have a mutable iterator over an array of pointers using the previously defined swap procedure. The procedure guarantees \texttt{RW} permission for both its arguments. Note that it is also possible to use renaming declarations for this case, but only the swap method could iterate pairwise over an array (bubble sorting, ...).
\begin{lstlisting}[style=spark]
Iterator:=Null;  -- Iterator:RW
for i in a..b loop
	Swap(Iterator, My_Array(i));  -- Iterator:RW and My_Array(...):RW = Null
	DoStuff(Iterator);            -- Iterator:RW and My_Array(...):RW = Null
	Swap(Iterator, My_Array(i));  -- Iterator:RW = Null and My_Array(...):RW
end loop;  --  My_Array:RW
\end{lstlisting}
	\subsubsection{Dynamic data structures}
	\label{sec:dds}
	The last example shows some pieces of a code that manipulates trees with their child-sibling representation\cite{Fredman1986}. The code has a procedure \texttt{Free\_Node} that deallocates recursively a whole tree, and some statements that allocate a tree shown by Figure~\ref{fig:childsibling}. Then some procedure is called with two nodes passed as borrowed parameters with mode \texttt{in-out}, before the whole tree is freed.
	
	\pagebreak
	\begin{wrapfigure}{r}{0.33\textwidth}
		\centering
		\vspace{-1cm}
		\includegraphics[width=.35\textwidth]{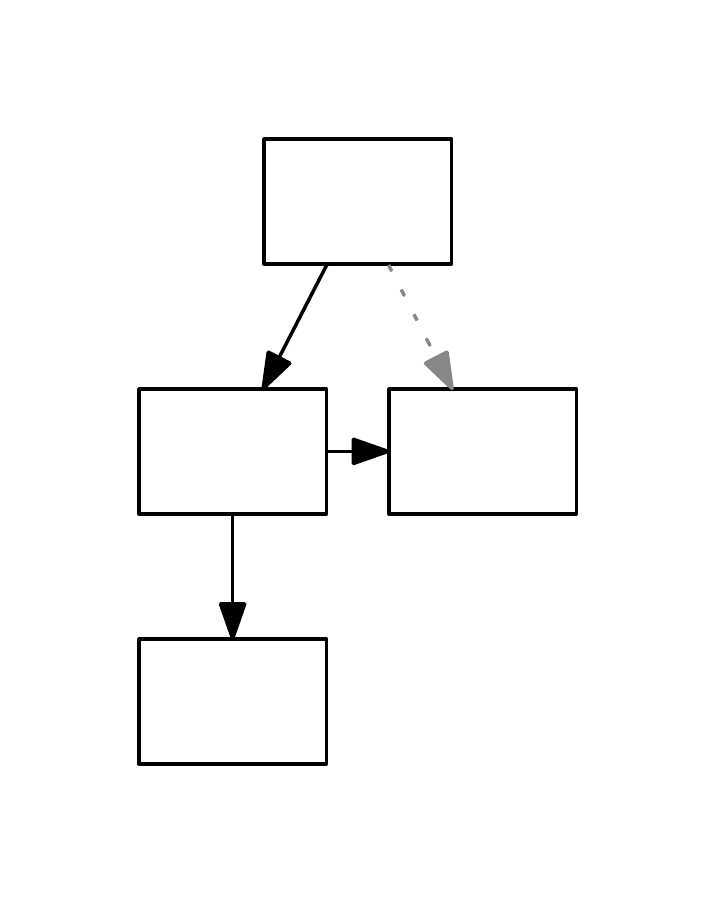}
		\caption{The tree in example \ref{sec:dds}.}
		\label{fig:childsibling}
		\vspace{-1cm}
	\end{wrapfigure}
	
\begin{lstlisting}[style=spark]
procedure Free_Node (N : in out Node) is
  if N.Child /= null
    Free_Node (N.Child);
  end if;
  if N.Sibling /= null
    Free_Node (N.Sibling);
  end if;
  Free (N); -- N set to null by Free;
end Free_Node;

Node := new Node;               -- allocate node
Node.Child := new Node;         -- allocate child
Node.Child.Child := new Node;   -- allocate child
Node.Child.Sibling := new Node; -- allocate sibling

Some_Procedure (Node.Child, Node.Sibling);

Free_Node (Node);

\end{lstlisting}
\vspace{-0.5cm}
	\subsection{Comparison with Rust}
	
	The following section compares Rust and SPARK on some constructs that seem relevant to the author. We study what features the anti-aliasing allows and what constructs can be done using these rules. The main difference comes from the fact that SPARK uses provers to handle a lot of safety features, that are handled directly by Rust's borrow-checker.
	
	\begin{wrapfigure}{r}{0.45\textwidth}
		\centering
		\vspace{-0.5cm}
		\begin{tabular}{|l|c|c|}
			\hline
			\textbf{Features} & \textbf{SPARK} & \textbf{Rust} \\
			\hline No-aliasing & \checkmark & \checkmark \\
			\hline Lifetime check & \tiny Specific checks & \checkmark \\
			\hline Automatic reclamation & \tiny Pools & \checkmark \\
			\hline Initialization checking & \tiny Flow & \checkmark \\
			\hline Nullity checking & \tiny By proof & \checkmark \\
			\hline Simple semantics & \checkmark & \cite{rustbelt} \\
			\hline No user annotation & \checkmark & \\
			\hline 
		\end{tabular}
		\caption{Comparison of the features available from the anti-aliasing analysis of SPARK and Rust.}
		\label{fig:features}
	\end{wrapfigure}

	First comparison is on the features done by the analysis of both languages (as shown in Figure~\ref{fig:features}). Even if both of them prevent non-benign aliasing in the source code, comparison shows that our rules do not handle as much features as Rust's borrow-checker, but allows more constructs than Rust (such as swap present in \ref{sec:Exswap}). 
	
	Indeed, Rust's borrow-checker checks lifetimes (so that a local deep variable cannot be assigned to a global one), whereas this check is implemented as a separate analysis of GNAT compiler, with some more restrictive rules than Rust. 
	
	Automatic reclamation is handled by Rust's borrow-checker (using a static garbage collector), whereas Ada has a separate feature, called \emph{pools}, acting like virtual stacks with a scope. Every variable allocated in this pool gets freed when the pool is destroyed (and lifetime checks prevent aliasing between two pools of different scopes). 
	
	Initialization checks are similarly done by Rust's borrow-checker in a safe way, whereas our rules only catch some uninitialized variable usage, with SPARK flow analysis doing the safe and very fine-grained analysis.
	
	Nullity-checking is also a built-in feature of Rust (there a no safe null pointers in Rust), whereas we allow them, considering dereferencing a null pointer should be considered as a runtime error (like division by 0), and should be checked by SPARK static analysis tool-suite at silver level (AoRTE proof) \cite{thalesada}.\footnote{The joint Thales-AdaCore guideline defines fives levels of software assurance for SPARK software. The first is stone, that consists of valid SPARK code. The second is bronze, SPARK code that passed the initialization check and has a correctly specified data-flow. The third is silver, that adds the proof of AoRTE. The fourth is gold, in which key integrity properties are proven, such as pre and post conditions, type predicates, loop invariants. The fifth requires full functional proof of requirements, with a program fully specified. More details can be found on the following blog post http://www.spark-2014.org/entries/detail/verifythis-challenge-in-spark }
	
	Finally, we should note that our analysis does not require any additional user annotation to the source code (whereas Rust needs sometimes explicit lifetimes to be specified), and the rules are unambiguously defined whereas for Rust no official document specifies the borrow-checker (the README provided is quite incomplete).

	\begin{wrapfigure}{r}{0.45\textwidth}
		\centering
		\vspace{-0.5cm}
		\begin{tabular}{|l|c|c|}
			\hline
			\textbf{Features} & \textbf{SPARK} & \textbf{Rust} \\
			\hline Move semantics & \checkmark & \checkmark \\
			\hline Read only grant & \tiny Only to callee & \checkmark \\
			\hline Move to callee & \checkmark & \checkmark \\
			\hline Move back to caller & \checkmark & \tiny Only 1 path\\
			\hline Primitives (swap) & \checkmark & \\
			\hline Fine graining (R/RW) & \checkmark &  \\
			\hline
		\end{tabular}
		\caption{Comparison of the constructs accepted by the anti-aliasing analysis of SPARK and Rust.}
		\label{fig:constructs}
	\end{wrapfigure}

	Second comparison is on what could be defined as expressiveness of the analysis, that means which constructs can be implemented on this analysis. Although we did not formalize this concept (and did not prove for Rust that some constructs could not be implemented), we consider that a little review of some constructs can be useful.
	
	We previously showed that cycling constructs could not be implemented in SPARK (No-cycle lemma). Rust has a similar limitation and requires some work-around to implement structures like graphs (like manually handling indexes).\footnote{https://github.com/nrc/r4cppp/blob/master/graphs/README.md} We also saw that swap primitive is available in SPARK whereas it requires unsafe code in Rust, and the \texttt{in-out} mode allows assigning the exact same path as the one borrowed. In Rust, this feature is available for only one path by moving the argument to the callee, and moving back from the callee the argument as return value of the function. 
	
	Similarly, our intra-procedural analysis has the best approximation possible on aliasing (except for object oriented programming), using the lattice of permissions previously defined that allows tracking the exact path that gets aliased. Rust has a mechanism of restrictions against assignment, mutable borrow, or read-only borrow, which would correspond respectively to \texttt{R/NO}, \texttt{R/W/NO}, or \texttt{W/NO}. For instance, in Rust, when moving a variable, every prefix and extension gets restricted indiscriminately, whereas in SPARK, only paths that access to the aliased data are restricted on their reading or writing permission accordingly. 
	
	Another interesting design choice we made, is to ban borrows or observes inside a block, whereas Rust allows it. Indeed, we do not see use cases for which we need to create several read-only copies of a same deep variable in the body of a subprogram, all the more since this feature would require some new lexical constructs (like keywords or symbols), that cannot be added easily in Ada. Moreover, Ada has a mechanism, called \emph{renaming}, that is used to shorten long paths, hence also rendering useless read-write borrows of a path. 
	
	\section{Conclusion}
	
In this report, we have presented anti-aliasing rules that allow adding pointers to SPARK, a subset of Ada used for static analysis with deductive verification. We showed a systematic analysis that allows a wide range use cases of pointers and dynamic allocation, as showed by the experiments. To the best of our knowledge, this is a novel approach for controlling aliasing introduced by arbitrary pointers in a programming language supported by proof. Our approach does not require user annotations or proof of verification conditions, which makes it much simpler to adopt. Moreover, we provided a mathematical proof of the safety of our analysis, and compared our method to another language providing such analysis, Rust. 

Yet, still some work needs to be done to start using this analysis in commercial applications. Indeed, some constructs of SPARK (specially object orientation) are still not handled efficiently by the rules, and the back-end still does not implement proofs with pointers, which hopefully will be solved during year 2018. Similarly, some Ada libraries have to be changed in order to get them accepted by SPARK (drivers and containers), even though those fixes can be done without major refactoring. We could hopefully expect that future contributors to these anti-aliasing rules will tackle those problems.

Another long term goal would be extending the analysis so that it could handle different features (like Rust), such as automatic reclamation, parallelism, initialization and lifetime checks, which will simplify the existing mechanisms in the GNAT compiler.

	\appendix	
	\pagebreak
	\section*{Appendix}
	\label{sec:Appendix}
	\section{References}
\begin{minipage}{\textwidth}
	\renewcommand{\section}[2]{}%
	\bibliography{report}
	\bibliographystyle{abbrv}
\end{minipage}
	
	\section{Grammar for {\muSPARK}}
	\label{sec:Grammar}

	\subsection*{Basic lexemes}
	
	$$ \begin{array}{lcl}
	\langle digit \rangle & ::= & 0-9 \\
	& & \\
	\langle alpha \rangle & ::= & a-z|A-Z \\
	& & \\
	\langle ident \rangle & ::= & \langle alpha \rangle (\langle alpha \rangle | \langle digit \rangle | \_)^* \\
	& & \\
	
	\langle integer \rangle & ::= & \langle digit \rangle ^+ \\
	& & \\
	\langle name \rangle & ::= & \langle ident \rangle \\
	& | &  \langle name \rangle \texttt{.} \langle ident \rangle\\
	& | &  \langle name \rangle \texttt{.} \texttt{all}\\
	\end{array} $$
	
	\subsection*{Expressions grammar}
	
	$$ \begin{array}{lcl}
	\langle expr \rangle & ::= & \texttt{null} \\
	& | &  \langle integer \rangle  \\
	& | &  \langle name \rangle\texttt{'Access} \\
	& | &  \langle name \rangle\\
	\end{array} $$
	
	\subsection*{Statements grammar}
	
	$$ \begin{array}{lcl}
	\langle instr \rangle & ::= & \langle name \rangle \texttt{:=} \langle expr \rangle \texttt{;} \\
	& | & \langle name \rangle \texttt{:=} ~\texttt{new}~ \langle ident \rangle \texttt{;} \\
	& | & \langle ident \rangle \texttt{(}~ {\langle expr \rangle}^+\hspace{-0.24cm}\texttt{,} ~\texttt{);}\\
	& | &  \texttt{if * then} ~ \langle instr \rangle ~\texttt{else}~\langle instr \rangle ~ \texttt{end if;}\\
	& | &  \texttt{begin}~ \langle  instr \rangle ^+ ~ \texttt{end;}\\
	\end{array} $$
	
	\subsection*{Declarations grammar}

	$$ \begin{array}{lcl}
	& & \\
	\langle file \rangle & ::= & \texttt{procedure}~\langle ident \rangle ~ \texttt{is} ~ \langle decl \rangle ^* ~ \texttt{begin} ~ \langle instr \rangle ~ \texttt{end;} \\
	& & \\
	\langle decl \rangle & ::= &  \texttt{type} ~\langle ident \rangle ~\texttt{is record} ~\langle field \rangle^+~\texttt{end record ;} \\
	& | &\texttt{procedure}~\langle ident \rangle ~ \texttt{(} ~\langle param \rangle ^+ _\texttt{;}~ \texttt{)} ~ \texttt{is} ~ \langle decl \rangle ^* ~ \texttt{begin} ~ \langle instr \rangle ~ \texttt{end;} \\
	& | & \langle ident \rangle : \langle type \rangle\texttt{;} \\
	&& \\
	\langle field \rangle & ::= & \langle ident \rangle ~\texttt{:} ~\langle type \rangle ~\texttt{;}  \\
	&& \\
	
	\langle type \rangle & ::= & \langle ident \rangle  \\
	& | & \texttt{access} ~\langle ident \rangle \\
	&& \\

	\langle param \rangle & ::= & \langle ident \rangle ~\texttt{:} ~\langle mode \rangle ~\langle type \rangle \\
	&& \\
	
	\langle mode \rangle & ::= & \texttt{in} ~|~ \texttt{in-out} ~|~ \texttt{out} \\
	\end{array} $$
	
	\section{Additional conditions on {\muSPARK}}
	
	Some additional conditions are required in order to have valid {\muSPARK}:
	\begin{itemize}
		\item In a procedure call with parameters \texttt{in-out} or \texttt{out}, the parameter should be a $\langle name \rangle$.
		\item It is not possible to write to a formal parameter or any of its fields if it is of mode in.
		\item A declaration cannot shadow another one.
		\item Two fields of a record cannot have the same name.
		\item We consider that all parameters of mode \texttt{in} to procedures are passed by copy. Hence, aliasing between \texttt{in} bound parameters can only occur between formal parameters of deep type.
		\item In record declaration, the record being declared can only be used under access type.
	\end{itemize}

	\section{Typing {\muSPARK}}
	
	The types for {\muSPARK} are defined as the following:
	$$ \begin{array}{lcl}
	\tau& ::= & \texttt{integer} ~|~ R ~|~\texttt{access}~ \tau ~|~ \texttt{nulltype}\\
	\end{array} $$
	where $R$ is a record type. We also write $x:\tau \in R$ if the record type $R$ has a field $x$ of type $\tau$.	We call $\equiv$ the smallest reflexive and symmetric relationship so that $\forall \tau, \texttt{nulltype} \equiv \texttt{access} ~\tau$. We use $\Gamma$ to denote a set of variable, type, and procedure declarations. We write $a \in \Gamma$ if the environment $\Gamma$ contains the declared object $a$. We also use syntactic sugar $\Gamma + d$ to add the object declared by the declaration $d$ to the environment $\Gamma$.
	
	We define deep types by induction. Any $\texttt{access}~\tau$ type is deep, and any record type $R$ in which there is at least one field of deep type is also deep. Any type that is not deep is called shallow.

	\subsection*{Well-formed typing environments}
	
	We call well-formed any typing environment $\Gamma$ that is built constructively from a well-typed program, starting with the empty environment. We assume that in the following typing rules, the environments are well-formed.
	
	\subsection*{Typing rules for names}
	
	We write $\Gamma \vdash_\textnormal{\small n} n : \tau$ if ``in the environment $\Gamma$, the name $n$ is well-typed and has type $\tau$''.
	
	\infrule[T-readIdent]
	{x : \tau \in \Gamma} 
	{\Gamma  \vdash_\textnormal{\small n} {x : \tau }}

	\infrule[T-readField]
	{\Gamma \vdash_\textnormal{\small n} n : R \andalso a : \tau \in R} 
	{\Gamma  \vdash_\textnormal{\small n} {n.a : \tau }}

	\infrule[T-readDeref]
	{\Gamma \vdash_\textnormal{\small n} n : \texttt{access} ~\tau } 
	{\Gamma  \vdash_\textnormal{\small n} {n.\texttt{all} : \tau }}
	
	\subsection*{Typing rules for expressions}
	
	We write $\Gamma \vdash_\textnormal{\small e} e : \tau$ if ``in the environment $\Gamma$, the expression $e$ is well-typed and has type $\tau$''.
	
	\infrule[T-nullValue]
	{} 
	{\Gamma \vdash_\textnormal{\small e} \texttt{null} : \texttt{nulltype}}

	\infrule[T-literal]
	{e ~\textnormal{integer literal}} 
	{\Gamma  \vdash_\textnormal{\small e} { e : \texttt{integer} }}

	\infrule[T-takeAccess]
	{\Gamma \vdash_\textnormal{\small n} n : \tau} 
	{\Gamma  \vdash_\textnormal{\small e} {n\texttt{'Access} : \texttt{access} ~\tau }}

	\infrule[T-name]
	{\Gamma \vdash_\textnormal{\small n} n : \tau} 
	{\Gamma  \vdash_\textnormal{\small e} {n : \tau }}
	
	\subsection*{Typing rules for statements}
	We write $\Gamma \vdash_\textnormal{\small s} s$ if ``in the environment $\Gamma$, the statement $s$ is well-typed''.

	\infrule[T-assignExpr]
	{\Gamma  \vdash_\textnormal{\small n} {x:\tau'} \andalso \Gamma \vdash_\textnormal{\small e} e : \tau \andalso \tau \equiv \tau'} 
	{\Gamma  \vdash_\textnormal{\small s} {x \texttt{:=}~ e}}

	\infrule[T-assignNew]
	{\Gamma  \vdash_\textnormal{\small n} {x:\tau'} \andalso \tau \in \Gamma \andalso \texttt{access}~ \tau \equiv \tau' } 
	{\Gamma  \vdash_\textnormal{\small s} {x \texttt{:=}~ \texttt{new}~ \tau}}
	
	\infrule[T-procedureCall]
	{\texttt{procedure} ~P(x_1:m_1 ~\tau_1',~...~,x_n:m_n ~\tau_n') \in \Gamma \\  \forall i \in \{1..k\}, m_i = \texttt{in} \\ \forall i \leq k, ~(\Gamma \vdash_\textnormal{\small e} e_i : \tau_i) \wedge (\tau_i \equiv \tau_i') \andalso \forall i > k, ~(\Gamma \vdash_\textnormal{\small n} n_i : \tau_i) \wedge (\tau_i \equiv \tau_i')} 
	{\Gamma  \vdash_\textnormal{\small s} {P(e_1, ~...~, e_k, n_{k+1}~... ~, n_n)}}
	
	\infrule[T-block]
	{\forall j,~ \Gamma \vdash_\textnormal{\small s} i_j} 
	{\Gamma  \vdash_\textnormal{\small s} {\texttt{begin}~ i_1 ~...~ i_n~ \texttt{end}}}
	
	\infrule[T-ifCondition]
	{\forall j,~ \Gamma \vdash_\textnormal{\small s} i_j} 
	{\Gamma  \vdash_\textnormal{\small s} {\texttt{if * then} ~ i_1 ~ \texttt{else} ~ i_2~ \texttt{end if}}}
	
	\subsection*{Typing rules for declarations}
	We write $\Gamma \vdash_\textnormal{\small t} \tau$ if ``in the environment $\Gamma$, the type $\tau$ is well formed''.

	\infrule[T-baseType]
	{} 
	{\Gamma  \vdash_\textnormal{\small t} {\texttt{integer}}}
	
	\infrule[T-accessType]
	{\Gamma  \vdash_\textnormal{\small t} \tau} 
	{\Gamma \vdash_\textnormal{\small t} {\texttt{access}~\tau}}
	
	\infrule[T-recordType]
	{R \in \Gamma} 
	{\Gamma \vdash_\textnormal{\small t} {R}}
	
	We write $\Gamma \vdash_\textnormal{\small d} d$ if ``in the environment $\Gamma$, the declaration $d$ is well formed''. Note that usage of global variables is forbidden inside procedures.

	\infrule[T-uninitDecl]
	{\Gamma \vdash_\textnormal{\small t} \tau} 
	{\Gamma \vdash_\textnormal{\small d} x : \tau}
	
	\infrule[T-procedureDecl]
	{\forall j, \Gamma \vdash_\textnormal{\small t} \tau_i' \andalso \Gamma' := \{\tau \in \Gamma, \texttt{procedure}~P'\in \Gamma\} + P + x_1:\tau_1' + ... + x_n : \tau_n' \\
	\forall k, \Gamma' + d_1 + ... + d_{k-1} \vdash_\textnormal{\small d} d_k \andalso \Gamma' + d_1 + ... + d_m \vdash_\textnormal{\small s} i} 
	{\Gamma \vdash_\textnormal{\small d} \texttt{procedure} ~P(x_1:\tau_1', ... , x_n : \tau_n') ~\texttt{is} ~d_1, ..., d_m ~\texttt{begin}~ i~ \texttt{end}}
	
	\infrule[T-recordDecl]
	{\forall i, \Gamma + R \vdash_\textnormal{\small t} \tau_i} 
	{\Gamma \vdash_\textnormal{\small d} {\texttt{type} ~R~ \texttt{is record}~ x_1:\tau_1, ..., x_n:\tau_n} ~\texttt{end record;}}
	
\section{Permission rules for µSPARk}
\label{sec:Permission Rules for MuSPARK}

\subsection*{Definitions}

\subsubsection*{Permission environments}

Permission environment $\Phi$ are mappings from variable declarations to permissions trees. As shown by the permission rules, permission environments evolve through the execution of the program. Note that in case of a recursive record type, the permission tree is infinite.

$$ \begin{array}{lcl}
P & ::= & Integer(Permission) \\
& | & Record(Permission, Fields \rightarrow P) \\
& | & Access(Permission, P) \\
& & \\
\end{array} $$

\subsubsection*{Normalized permission environment}
We say that a permission environment is normalized when each node of each tree has a permission that is more permissive than the most restrictive permission of its children.

\subsubsection*{PermRelease operator}
\label{sec:PermRelease}
We define a normalization operator called $\textit{PermRelease}$ that takes a permission environment as input, and produces another one with permissions of each node changed using the following definition:

$$
 \begin{array}{lcl}
\textit{PermRelease} ~(Integer(\kappa))  & = & Integer(\kappa) \\
\textit{PermRelease} ~(Record(\kappa, f_i \rightarrow F_i))  & = &  Record(\kappa \vee (\bigwedge_{i} \textit{PermRelease}~(F_i).Permission), \\
&&f_i \rightarrow \textit{PermRelease} ~(F_i))\\
\textit{PermRelease} ~(Access(\kappa, D))  & = & Access(\kappa \vee \textit{PermRelease}~ (D).Permission, \textit{PermRelease} ~(D)) \\
& & \\
\end{array} 
$$

This produced environment is trivially normalized.

\subsubsection*{Fusion operator}
We define an operator called $\textit{Fusion}$ that takes two (or more) permission environments as input, and outputs another one where for each node, the permission is updated by taking the glb of every identical node from the merged environments.

\subsubsection*{Pfresh(tau,kappa)}

We also define the function $\textit{Pfresh}(\tau,\kappa)$ that takes a type $\tau$ and a permission $\kappa$ as input and produces an adequate permission tree.

More specifically, \begin{itemize}
	\item $\textit{Pfresh}(\texttt{integer}, \kappa)$ is equivalent to $Integer(\kappa)$.
	\item $\textit{Pfresh}(\texttt{access}~ \tau,\kappa)$ is equivalent to $Access(\kappa, \textit{Pfresh}(\tau,\kappa))$.
	\item $\textit{Pfresh}(R,\kappa)$ is equivalent to $Record(\kappa, \lambda x:\tau \in R. \textit{Pfresh}(\tau,\kappa))$.
\end{itemize}

\subsubsection*{Syntactic sugar}

We allow modifying or accessing permission and memory trees at a given paths in the environments. For instance, $x.a.all.y\mapsto \textit{Pfresh}(\tau)$ will modify the node $x.a.all.y$ in the tree of x, by replacing its subtree with a new (fresh) tree.

\subsection*{Permission rules for names}
We annotate each permission rule with $_n$, $_e$, $_s$ or $_d$ for readability purposes.

\subsubsection*{Move}
\infrule[P-M-ident]
{} 
{\Phi.~ident \xrightarrow{Move}_n \Phi[n\mapsto\texttt{W}]}

\infrule[P-M-field]
{\Phi.~n \xrightarrow{Move}_n \Phi' } 
{\Phi.~n.a \xrightarrow{Move}_n \Phi'[n.a\mapsto\texttt{W}] }

\infrule[P-M-deref]
{\Phi.~n \xrightarrow{Move}_n \Phi' } 
{\Phi.~n.all \xrightarrow{Move}_n \Phi'[n.all\mapsto\texttt{W}] }

\infrule[P-SM-ident]
{} 
{\Phi.~ident \xrightarrow{Access}_n \Phi[n\mapsto\texttt{NO}]}

\infrule[P-SM-field]
{\Phi.~n \xrightarrow{Access}_n \Phi' } 
{\Phi.~n.a \xrightarrow{Access}_n \Phi'[n.a\mapsto\texttt{NO}] }

\infrule[P-SM-deref]
{\Phi.~n \xrightarrow{Move}_n \Phi' } 
{\Phi.~n.all \xrightarrow{Access}_n \Phi'[n.all\mapsto\texttt{NO}] }

\subsubsection*{Borrow}
\infrule[P-B-identDeep]
{\Gamma \vdash_n n : \tau \andalso \tau ~\textnormal{deep}} 
{\Phi.~ident \xrightarrow{Borrow}_n \Phi[n\mapsto\texttt{NO}]}

\infrule[P-B-fieldDeep]
{\Gamma \vdash_n n.a : \tau \andalso \tau ~\textnormal{deep} \andalso \Phi.n \xrightarrow{Borrow}_n \Phi'} 
{\Phi.~n.a \xrightarrow{Borrow}_n \Phi'[n.a\mapsto\texttt{NO}] }

\infrule[P-B-derefDeep]
{ \Gamma \vdash_n n.all : \tau \andalso \tau ~\textnormal{deep} \andalso \Phi.~n \xrightarrow{Borrow}_n \Phi'} 
{\Phi.~n.all \xrightarrow{Borrow}_n \Phi'[n.all\mapsto\texttt{NO}] }

\infrule[P-B-identShallow]
{\Gamma \vdash_n n : \tau \andalso \tau ~\textnormal{shallow}} 
{\Phi.~ident \xrightarrow{Borrow}_n \Phi[n\mapsto\texttt{R}]}

\infrule[P-B-fieldShallow]
{\Gamma \vdash_n n.a : \tau \andalso \tau ~\textnormal{shallow} \andalso \Phi.~n \xrightarrow{Borrow}_n \Phi' } 
{\Phi.~n.a \xrightarrow{Borrow}_n \Phi'[n.a\mapsto\texttt{R}] }

\infrule[P-B-derefShallow]
{\Gamma \vdash_n n.all : \tau  \andalso \tau ~\textnormal{shallow} \andalso \Phi.~n \xrightarrow{Borrow}_n \Phi'} 
{\Phi.~n.all \xrightarrow{Borrow}_n \Phi'[n.all\mapsto\texttt{R}] }

\infrule[P-B-entryPointInOut]
{\Phi(n) = \texttt{RW}  \andalso \Phi.~n \xrightarrow{Borrow}_n \Phi'\\ 
	\Phi'' = \Phi'[\forall n' ~ \textnormal{strict deep extension of} ~ n, n' \mapsto \texttt{NO},\\ \forall n' ~ \textnormal{strict shallow extension of} ~ n, n' \mapsto \texttt{R}]}
{\Phi.~n \xrightarrow{BorrowInOut}_e \Phi''}

\infrule[P-B-entryPointOut]
{\Phi(n) = \texttt{W,RW} \andalso \Phi.~n \xrightarrow{Borrow}_n \Phi'\\ 
	\Phi'' = \Phi'[\forall n' ~ \textnormal{strict deep extension of} ~ n, n' \mapsto \texttt{NO},\\ \forall n' ~ \textnormal{strict shallow extension of} ~ n, n' \mapsto \texttt{R}]}
{\Phi.~n \xrightarrow{BorrowOut}_n \Phi''}
\subsubsection*{Observing}
\infrule[P-O-ident]
{\Gamma \vdash_n n : \tau \andalso \tau ~\textnormal{deep}} 
{\Phi.~ident \xrightarrow{ObserveName}_n \Phi[n\mapsto\texttt{R}]}

\infrule[P-O-field]
{\Gamma \vdash_n n.a : \tau \andalso \tau ~\textnormal{deep} \andalso \Phi.~n \xrightarrow{ObserveName}_n \Phi' } 
{\Phi.~n.a \xrightarrow{ObserveName}_n \Phi'[n.a\mapsto\texttt{R}] }

\infrule[P-O-deref]
{\Gamma \vdash_n n.all : \tau \andalso \tau ~\textnormal{deep} \andalso \Phi.~n \xrightarrow{ObserveName}_n \Phi'} 
{\Phi.~n.all \xrightarrow{ObserveName}_n \Phi'[n.all\mapsto\texttt{R}] }

\infrule[P-O-entryPoint]
{\Phi(n) = \texttt{R,RW} \andalso \Phi.~n \xrightarrow{ObserveName}_n \Phi'\\ 
	\Phi'' = \Phi'[\forall n' ~ \textnormal{strict extension of} ~ n, n' \mapsto \texttt{R}]}
{\Phi.~n \xrightarrow{Observe}_n \Phi''}

\subsection*{Permission rules for expressions}

\subsubsection*{Borrow (in or in-out mode)}

Note that it is not possible to borrow an integer literal.

\infrule[P-B-nullValue]
{} 
{\Phi.~\texttt{null} \xrightarrow{BorrowInOut}_e \Phi}

\infrule[P-B-takeAccess]
{\Phi.~n \xrightarrow{BorrowInOut}_n \Phi'}
{\Phi.~n \texttt{'Access} \xrightarrow{BorrowInOut}_e \Phi'}

\infrule[P-B-name]
{\Phi.~n \xrightarrow{BorrowInOut}_n \Phi'}
{\Phi.~n \xrightarrow{BorrowInOut}_e \Phi'}

\subsubsection*{Borrow (out mode)}

Only names can be borrowed with mode \texttt{out}.

\infrule[P-B-name-Out]
{\Phi.~n \xrightarrow{BorrowOut}_n \Phi'}
{\Phi.~n \xrightarrow{BorrowOut}_e \Phi'}

\subsubsection*{Observing}

\infrule[P-O-nullValue]
{} 
{\Phi.~\texttt{null} \xrightarrow{Observe}_e \Phi}

\infrule[P-O-literal]
{e ~\textnormal{integer literal}} 
{\Phi.~e \xrightarrow{Observe}_e \Phi}

\infrule[P-O-takeAccess]
{\Phi.~n \xrightarrow{ObserveName}_n \Phi'}
{\Phi.~n \texttt{'Access} \xrightarrow{Observe}_e \Phi''}

\infrule[P-O-name]
{\Phi.~n \xrightarrow{ObserveName}_n \Phi'}
{\Phi.~n \xrightarrow{Observe}_e \Phi''}

\subsection*{Permission rules for statements}

\infrule[P-assignNull]
{x:\tau \in \Gamma \andalso \Phi(x) = \texttt{W,RW}} 
{\Phi.~x \texttt{:= null} \xrightarrow{}_s \textit{PermRelease}(\Phi[x\mapsto \textit{Pfresh}(\tau,\texttt{RW})])}

\infrule[P-assignLiteral]
{x:\tau\in\Gamma \andalso e~ \textnormal{literal} \andalso \Phi(x) = \texttt{W,RW}} 
{\Phi.~x \texttt{:=}~ e \xrightarrow{}_s \textit{PermRelease}(\Phi[x\mapsto \textit{Pfresh}(\tau,\texttt{RW})])}

\infrule[P-assignDeepName]
{x:\tau\in\Gamma \andalso n~ \textnormal{name} \andalso \Phi(n) = \texttt{RW} \andalso \Phi.~n \xrightarrow{Move}_n \Phi' \\ \tau ~\textnormal{deep} \andalso \Phi'' = \Phi'[\forall n' ~ \textnormal{strict extension of} ~n~\textnormal{with more \texttt{.all}}, n' \mapsto \texttt{NO} \\ \forall n' ~ \textnormal{strict deep extension of} ~n ~\textnormal{with same number of \texttt{.all}}, n' \mapsto \texttt{W}  \\ \forall n' ~ \textnormal{strict shallow extension of} ~n ~\textnormal{with same number of \texttt{.all}}, n' \mapsto \texttt{RW} ] \\  \Phi''(x) = \texttt{W,RW}} 
{\Phi.~x \texttt{:=}~ n \xrightarrow{}_s \textit{PermRelease}(\Phi''[x\mapsto \textit{Pfresh}(\tau,\texttt{RW}))])}

\infrule[P-assignShallowName]
{x:\tau\in\Gamma \andalso n~ \textnormal{name} \\   \Phi(n) = \texttt{R} \andalso \Phi(x) = \texttt{W,RW} \andalso \tau ~\textnormal{shallow}} 
{\Phi.~x \texttt{:=}~ n \xrightarrow{}_s \textit{PermRelease}(\Phi[x\mapsto \textit{Pfresh}(\tau,\texttt{RW}))])}

\infrule[P-assignAccess]
{x:\tau\in\Gamma \andalso n~ \textnormal{name} \andalso \Phi(n) = \texttt{RW} \andalso \Phi.~n \xrightarrow{Access}_n \Phi' \\ 
	\Phi'' = \Phi'[\forall n' ~ \textnormal{strict extension of} ~n, n' \mapsto \texttt{NO}] \\  \Phi''(x) = \texttt{W,RW}} 
{\Phi.~x \texttt{:=}~ n\texttt{'Access} \xrightarrow{}_s \textit{PermRelease}(\Phi''[x\mapsto \textit{Pfresh}(\tau,\texttt{RW})])}

\infrule[P-assignNew]
{\Phi(x) = \texttt{W,RW}} 
{\Phi.~x \texttt{:=}~ \texttt{new}~\tau \xrightarrow{}_s\textit{PermRelease}(\Phi[x\mapsto \textit{Pfresh}(\tau,\texttt{RW})])}

\infrule[P-call]
{\texttt{procedure} ~P(x_1:\tau_1,~...~,x_n:\tau_n) ~\textnormal{with}~\\ x_1...x_a~ \textnormal{observed},\\ ~ x_{a+1}...x_b~ \textnormal{borrowed with mode \texttt{in}}, \\ ~ x_{b+1}...x_c~ \textnormal{borrowed with mode \texttt{in-out}}, \\~ x_{c+1}...x_n~ \textnormal{borrowed with mode \texttt{out}}, \\ 
	\Phi_1 = \Phi \\ 
	\forall~ 0 < i \leq a, \Phi_i.~e_i \xrightarrow{Observe}_e \Phi_{i+1} \\ 
	\forall~ a < i \leq b, \Phi_i.~e_i \xrightarrow{BorrowInOut}_e \Phi_{i+1} \\
	\forall~ b < i \leq c, \Phi_i.~e_i \xrightarrow{BorrowInOut}_e \Phi_{i+1} \\   
	\forall~ c < i \leq n, \Phi_i.~e_i \xrightarrow{BorrowOut}_e \Phi_{i+1} \\
	\Phi'=\textit{PermRelease}(\Phi[\forall~b < i \leq n: e_i \mapsto  \textit{Pfresh}(\tau_i,\texttt{RW})])
}
{\Phi.~P(e_1, ~...~, e_n) \xrightarrow{}_s \Phi'}

\infrule[P-block]
{\forall j>0,~ \Phi_j .~ i_j \xrightarrow{}_s \Phi_{j+1}} 
{\Phi_1. ~{\texttt{begin}~ i_1 ~...~ i_n~ \texttt{end}}  \xrightarrow{}_s \Phi_{n+1}}

\infrule[P-ifCondition]
{	\Phi . ~i_1 \xrightarrow{}_s \Phi' \andalso
	\Phi .~ i_2 \xrightarrow{}_s \Phi''
}
{\Phi .~\texttt{if * then} ~ i_1 ~ \texttt{else} ~ i_2~ \texttt{end if} \xrightarrow{}_s  \textit{PermRelease}(\textit{Fusion}(\Phi', \Phi'')) }

\subsection*{Permission rules for declarations}

\infrule[P-uninitDecl]
{} 
{\Phi.~ x : \tau \texttt{;} \xrightarrow{}_d \Phi[x\mapsto \textit{Pfresh}(\tau,\texttt{W})]}

\infrule[P-procedureDecl]
{\texttt{procedure} ~P(x_1,~...~,x_n) ~\textnormal{with}~\\ x_1...x_a~ \textnormal{observed},\\ ~ x_{a+1}...x_b~ \textnormal{borrowed with mode \texttt{in} or \texttt{in-out}}, \\~ x_{b+1}...x_n~ \textnormal{borrowed with mode \texttt{out}}, 
	\\
	\Phi'=\{\forall~0<i\leq a, x_i \mapsto \textit{Pfresh}(\tau_i,\texttt{R}), \\ 
	\forall~a<i\leq b, x_i \mapsto \textit{Pfresh}(\tau_i,\texttt{RW}),\\
	\forall~b<i\leq n, x_i \mapsto \textit{Pfresh}(\tau_i,\texttt{W})\}\\
	\Phi_1 = \Phi' \\	
	\forall k, \Phi_k . ~d_k \xrightarrow{}_d \Phi_{k+1}
	\\
	\Phi_{m+1} . ~i \xrightarrow{}_d \Phi'' \\ 
	\forall i \in \{a+1..n\}, \Phi''(x_i) = \texttt{RW}} 
{\Phi .~\texttt{procedure} ~P(x_1, ... , x_n) ~\texttt{is} ~d_1, ..., d_m ~\texttt{begin}~ i~ \texttt{end} \xrightarrow{}_d \Phi''}

\section{Theorems and proofs}
\label{sec:Proof}
We want to show that at each point of the program, the following lemmas and theorem hold.

\setcounter{lemma}{0}
\subsection*{Lemmas}

\begin{lemma}\textnormal{(Normalization)\textbf{.}}
	Permission environment is normalized. 
\end{lemma}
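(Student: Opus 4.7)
The plan is to proceed by induction on the derivation of the permission rules that produced the current environment $\Phi$, starting from the (trivially normalized) empty initial environment. The central auxiliary observation I would establish first is that the \textit{PermRelease} operator always yields a normalized permission tree; once this fact is in hand, the main induction collapses into a short rule-by-rule case analysis.

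To prove the auxiliary claim, I would use structural induction on the permission tree. For an $Integer(\kappa)$ leaf, the normalization condition is vacuous. For a $Record(\kappa, f_i \mapsto F_i)$ node (and analogously for $Access(\kappa, D)$), the operator replaces the root permission by $\kappa \vee \bigwedge_i \textit{PermRelease}(F_i).Permission$. By construction this value dominates the glb of the children's post-\textit{PermRelease} permissions in the lattice of Figure~\ref{fig:hasse}, which is precisely the normalization condition at the root; the subtrees themselves are normalized by the structural induction hypothesis.

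For the main induction, each statement rule \textsc{(P-assignNull)}, \textsc{(P-assignLiteral)}, \textsc{(P-assignDeepName)}, \textsc{(P-assignShallowName)}, \textsc{(P-assignAccess)}, \textsc{(P-assignNew)}, \textsc{(P-call)}, and \textsc{(P-ifCondition)} closes with an application of \textit{PermRelease}, so normalization is immediate from the auxiliary claim. The \textsc{(P-block)} rule follows by chaining the inductive hypothesis across its sub-statements. The \textsc{(P-uninitDecl)} rule produces a fresh tree via \textit{Pfresh}$(\tau,\texttt{W})$, in which every node carries the same permission $\texttt{W}$, so normalization is trivial. Finally, \textsc{(P-procedureDecl)} defers to the inner statement and declaration rules for its body, which are already covered.

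The hard part will not be any individual rule but the bookkeeping around \textit{PermRelease}: one must be careful to state the normalization condition in the right direction (the parent permission is at least the glb of its children in the permission lattice) and to verify that the recursive definition of \textit{PermRelease} matches that direction on every node kind, including the interaction between the existing root permission $\kappa$ and the join with the children's recomputed permissions. Once that alignment between the lattice operations and the shape of the operator is pinned down, the rest of the proof is purely mechanical.
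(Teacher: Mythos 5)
Your proposal is correct and follows essentially the same route as the paper: every statement rule discharges normalization by ending in \textit{PermRelease} (whose output the paper asserts, and you verify by structural induction, is always normalized), while \textsc{(P-block)} chains the induction and the declaration rules produce uniformly-permissioned fresh trees. Your explicit check that the recursive definition of \textit{PermRelease} matches the direction of the normalization condition is a welcome elaboration of what the paper leaves as "trivially normalized," but it is not a different argument.
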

\begin{proof}
	Straightforward. Every rule of the semantics calls the normalization operator \textit{PermRelease}. The rules that do not call it are \textsc{(P-block)} (trivial case), and declaration rules (they create only new trees with one node).
\end{proof}

\begin{lemma}\textnormal{(Coherence)\textbf{.}}
	For every node $n$ in the memory environment, it is possible to find an associated node $n'$ in the permission environment, such that $n$ and $n'$ designate the same path.
\end{lemma}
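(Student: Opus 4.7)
The plan is to prove the Coherence lemma by induction on the length of the operational-semantics derivation, maintaining as a strengthened invariant that the memory environment $\Upsilon$ and the permission environment $\Phi$ are structurally aligned in the sense that every memory tree in $\Upsilon$ can be obtained from the corresponding permission tree in $\Phi$ by pruning subtrees at \emph{Access} nodes whose pointer value is \emph{Null}. In other words, the two environments agree on the skeletons built from $Integer$, $Record$, and non-null $Access$ constructors, with memory's $Access(C,\texttt{Null})$ corresponding to a permission $Access(\kappa, P)$ whose subtree $P$ simply has no memory counterpart.

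For the base case, the initial environments are both empty, and the invariant holds vacuously. For the inductive step, I would proceed by case analysis on the evaluation rule applied, and in each case compare the effect of the operational rule on $\Upsilon$ with the effect of the matching permission rule on $\Phi$. The declaration rules \textsc{(E-uninitDecl)} and \textsc{(P-uninitDecl)} both call $\textit{fresh}(\tau)$ and $\textit{Pfresh}(\tau,\kappa)$, which have been defined by parallel recursion on $\tau$, so they produce aligned trees up to the pruning of \emph{null} access cells. The allocation rule \textsc{(E-assignNew)}/\textsc{(P-assignNew)} proceeds identically. The assignment rules \textsc{(E-assignName)}, \textsc{(E-assignAccess)}, \textsc{(E-assignLiteral)}, \textsc{(E-assignNull)} rely on the recursive $\textit{Assign}$ operator, which preserves the skeleton of the assigned subtree and only updates integer values or access targets; the corresponding permission updates preserve the shape of the permission tree, so the alignment is maintained. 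The control-flow rules \textsc{(E-block)}, \textsc{(E-ifConditionTrue)}, and \textsc{(E-ifConditionFalse)} follow directly from the induction hypothesis, since $\textit{Fusion}$ only rewrites permissions without touching tree shapes.

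The delicate step, and the one I expect to require the most care, is the procedure call \textsc{(E-call)}/\textsc{(P-call)}. There, fresh copies of the formals are produced in $\Upsilon'$ through $\textit{GetFromExpr}$ while $\Phi$ is modified by the borrow and observe rules without allocating new trees for the formals on the permission side; I would need to extend the invariant so that inside a call the callee's environment associates to each formal $x_i$ a permission tree built by $\textit{Pfresh}(\tau_i,\kappa_i)$ that matches the memory tree returned by $\textit{GetFromExpr}_{\Upsilon}(e_i)$ up to null-pruning. Once this extended invariant is in place, the recursive call on the procedure body closes the induction, and the post-call updates via $\textit{SetFromExpr}$ and $\textit{Assign}$ reduce to the assignment case already handled. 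A secondary subtlety, worth flagging but easy to resolve, is that permission trees may be infinite for recursive record types: the claim is only that for every finite memory node one finds a corresponding finite prefix of the permission tree, which is exactly what the pruning description provides.
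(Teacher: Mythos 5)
Your proposal is correct and rests on exactly the same key observation as the paper's proof: memory trees are obtained from permission trees by pruning subtrees at \emph{Access} nodes holding \texttt{Null}. The paper states this correspondence directly in two sentences, whereas you make the underlying induction over the semantics explicit (including the procedure-call case); this is a faithful elaboration rather than a different route.
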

\begin{proof}
	Every permission rule manipulates permission trees that have the same constructs as memory trees, except for access types that cannot be null. Hence every memory tree can be obtained from a permission tree by cutting nodes at access nodes when the pointer has \texttt{null} value.
\end{proof}

\begin{lemma}\textnormal{(Readability)\textbf{.}} 
	If a node has permission \texttt{R} (resp. \texttt{RW}), then all its children have permission \texttt{R} (resp. \texttt{RW}) at each step of the semantics.
\end{lemma}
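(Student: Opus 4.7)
The plan is to prove the statement by induction on the length of a derivation in the operational semantics, maintaining as invariant ``for every node $n$ of the permission environment, if $\Phi(n) \in \{\texttt{R},\texttt{RW}\}$ then every child of $n$ has the same permission''. The \emph{Normalization} lemma already guarantees that after every rule the environment is normalized, and this, together with how \textit{Pfresh} and \textit{PermRelease} behave, will do most of the work. The base case (empty environment) is vacuous, so I focus on the inductive step, splitting on the last rule applied.

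The easy group of cases is those where the permission subtree in question is either created wholesale or reconciled only by \textit{Fusion}. Every call of \textit{Pfresh}$(\tau,\kappa)$ produces a subtree with the same permission $\kappa$ at every node, so the invariant holds trivially on any freshly allocated subtree. This disposes of \textsc{(P-uninitDecl)}, \textsc{(P-procedureDecl)}, \textsc{(P-assignNull)}, \textsc{(P-assignLiteral)}, \textsc{(P-assignNew)}, and the assigned path of every other assignment and call rule. For \textsc{(P-ifCondition)} the \textit{Fusion} operator takes the glb of two environments, both satisfying the invariant by the inductive hypothesis: since the glb of two lattice elements at least \texttt{R} is still at least \texttt{R}, and the glb of two \texttt{RW} permissions is \texttt{RW}, the invariant survives the merge. \textsc{(P-block)} just threads the inductive hypothesis sequentially through each statement.

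The substantive cases are those where permissions are adjusted along a path rather than by fresh allocation. For a moved name (\textsc{(P-assignDeepName)}, \textsc{(P-assignAccess)}), each \textsc{(P-M-*)} and \textsc{(P-SM-*)} rule lowers a prefix permission to \texttt{W} or \texttt{NO}, both of which satisfy the invariant vacuously; the strict \emph{extensions} are then explicitly reassigned by the enclosing rule to \texttt{NO}, \texttt{W}, or \texttt{RW} according to depth and to whether the subtype is deep or shallow. Similarly, in \textsc{(P-call)} the entry-point rules for observe and borrow rewrite every strict extension of the passed path to \texttt{R} or \texttt{NO}. After these changes the closing \textit{PermRelease} only ever sets a parent to the join of its previous permission with the glb of its (normalized) children, so it cannot produce an \texttt{R} or \texttt{RW} parent above a strictly less permissive child.

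The delicate sub-case I expect to be the main obstacle is \textsc{(P-assignDeepName)} applied to a record that mixes shallow fields (reset to \texttt{RW}) and deep fields (reset to \texttt{W}). One must verify that what \textit{PermRelease} computes at the record node and at its ancestors remains consistent with the invariant: in particular, that the shallow extensions marked \texttt{RW} really are uniformly \texttt{RW} throughout their own subtrees (which follows because shallow types contain no access component, reducing the situation to the \textit{Pfresh}-style argument of the easy group), and that a \texttt{W} deep field cannot be lifted back to \texttt{R} by the upward propagation. I plan to handle this by unfolding \textit{PermRelease} a single level and appealing to the deep/shallow split built into the permission-tree definition, which is exactly where the distinction between \texttt{W} and \texttt{R} in the lattice is forced.
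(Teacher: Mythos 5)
Your proof follows essentially the same route as the paper's: induction on the semantics with a case split on the last rule, using the uniformity of \textit{Pfresh} for assigned paths, the \textsc{(P-M-*)}/\textsc{(P-SM-*)} rules for prefixes of moved paths, the explicit extension clauses of the assignment rules, and the observe/borrow entry-point rules for calls. The only difference is that you spell out the \textit{Fusion}/glb argument for conditionals and the upward behaviour of \textit{PermRelease} on mixed shallow/deep records, which the paper's own proof leaves implicit; this is a welcome refinement rather than a departure.
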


\begin{proof}
	By induction on the semantics: 
	\begin{enumerate}
		\item \textsc{(E-assignNull)}, \textsc{(E-assignLiteral)}: $\textit{Pfresh}(\tau, \texttt{RW})$ guarantees it for the assigned path.
		\item \textsc{(E-assignName)}, \textsc{(E-assignAccess)}, \textsc{(E-assignNew)}: same thing for the assigned path. For the moved path, every prefix gets its permission set to either \texttt{W} or \texttt{NO} by the \textsc{(P-M-ident)}, \textsc{(P-M-field)}, \textsc{(P-M-deref)} or \textsc{(P-SM-ident)}, \textsc{(P-SM-field)}, \textsc{(P-SM-deref)} rules. The extensions are handled by \textsc{(P-assignDeepName)}, \textsc{(P-assignShallowName)}, \textsc{(P-assignAccess)}, and \textsc{(P-assignNew)}.
		\item \textsc{(E-call)}: we create such nodes by observing their paths. The rules \textsc{(P-O-nullValue)}, \textsc{(P-O-litteral)}, \textsc{(P-O-takeAccess)}, \textsc{(P-O-name)} guarantee that if any node is set to permission \texttt{R}, then any extension of it is also set to \texttt{R}. After the callee returned, the proof is identical to assignments.
		\item \textsc{(E-block)}, \textsc{(E-ifConditionTrue)}, \textsc{(E-ifConditionFalse)}, \textsc{(E-uninitDecl)}, \textsc{(E-procedureDecl)}: trivial
	\end{enumerate}
\end{proof}

\begin{lemma}\textnormal{(No-cycle)\textbf{.}} 
	A node cannot have the same memory cell as any of its descendants (hence memory trees are finite).
\end{lemma}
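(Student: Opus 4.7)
The natural plan is induction on the operational semantics, using the normalization and coherence lemmas already established together with the structure of the permission rules. I would first separate the semantic rules into three groups: (i) rules that only create fresh memory cells or prune existing subtrees (\textsc{(E-uninitDecl)}, \textsc{(E-assignNull)}, \textsc{(E-assignLiteral)}, \textsc{(E-assignNew)}); (ii) rules that do not touch memory cells at all but only manipulate environments or permissions (\textsc{(E-block)}, \textsc{(E-procedureDecl)}, \textsc{(E-ifConditionTrue)}, \textsc{(E-ifConditionFalse)}); and (iii) the genuinely delicate rules, namely \textsc{(E-assignName)}, \textsc{(E-assignAccess)}, and \textsc{(E-call)}, which can grow the reachable memory graph through a pointer and are therefore the only ones capable of introducing a cycle.

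For group (i) the argument is immediate because \textit{fresh}($\tau$) returns brand-new cells (by definition of the oracle) and the other rules either cut a subtree or replace a cell's scalar value; no new indirection is introduced so the induction hypothesis carries over. For group (ii) the statement is preserved verbatim: \textit{Fusion} only reconciles permissions, block and procedure declaration rules just thread the environment through sub-derivations, so one applies the induction hypothesis to each sub-derivation.

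The heart of the proof, and the step I expect to be the main obstacle, is group (iii). Here the danger is that an \textit{Assign} on an $Access$ node $A$ could replant a subtree that already contains $A$ as a descendant, producing a cycle. The key idea is to exploit the permission rules that gate the semantic rule: both \textsc{(P-assignDeepName)} and \textsc{(P-assignAccess)} demand \texttt{NO} permission on every strict extension of the moved path \emph{before} they check \texttt{W}/\texttt{RW} on the assigned target. By the \emph{Coherence} lemma, every memory cell has a matching permission node, and by \emph{Normalization} plus \emph{Readability} a node with permission \texttt{RW} cannot have an ancestor whose extensions were forced to \texttt{NO}. Hence the assigned $Access$ cell cannot lie underneath the moved subtree, which is exactly the configuration needed to create a cycle. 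I would spell this argument out once for \textsc{(E-assignName)} and then observe that \textsc{(E-assignAccess)} is the same argument with the roles of source and target recast through the \texttt{'Access} attribute.

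Finally, \textsc{(E-call)} reduces to group (iii): the \textit{GetFromExpr} step only copies cells into fresh slots for the formals, so no cycle is created on entry; inside the callee the induction hypothesis applies to the body; and on return the rule performs an \textit{Assign} (or, for \texttt{in} access-to-variable parameters, a \textit{SetFromExpr} which is literally either \textit{Assign}($n$, $\Upsilon''(x_i)$) or \textit{Assign}($n$, $\Upsilon''(x_i.\texttt{all})$)), so the earlier assignment argument applies verbatim. Combining the three groups closes the induction and yields both the no-cycle property and, as an immediate corollary, the finiteness of memory trees.
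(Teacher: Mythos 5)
Your proposal is correct and follows essentially the same route as the paper: induction on the operational semantics, with the fresh-allocation and environment-threading rules dismissed immediately, and the assignment/call cases resolved by observing that \textsc{(P-assignDeepName)} and \textsc{(P-assignAccess)} set the strict extensions of the moved path to \texttt{NO} \emph{before} checking \texttt{W}/\texttt{RW} on the assigned path, so the assigned cell cannot sit under the moved subtree. The only cosmetic difference is that you route this last step through Coherence, Normalization and Readability, whereas the paper appeals directly to the failed permission check on the assigned node.
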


\begin{proof}
	By induction on the semantics: 
	\begin{enumerate}
		\item \textsc{(E-assignNull)}, \textsc{(E-assignLiteral)}, \textsc{(E-assignNew)}: given that those rules only cut memory trees, there are no new indirections created.
		\item \textsc{(E-assignName)}: such a cycle could only be created when applying $Assign$ on a node $Access(C1, \_)$ which is a descendant of the node being moved $Access(\_, P)$. In such a case, this would contradict hypothesis of the rule \textsc{(P-assignDeepName)}, given that we set  descendants of the moved node (ie strict extensions) to \texttt{NO}, before checking \texttt{RW} for the node to assign.
		\item \textsc{(E-assignAccess)}: such a cycle could only be created if the assigned $Access$ node is a descendant of the subtree we are taking address of. In such a case, this would contradict hypothesis of the rule \textsc{(P-assignAccess)}, given that we set  descendants of the moved node (ie strict extensions) to \texttt{NO}, before checking \texttt{RW} for the node to assign.
		\item \textsc{(E-call)}: borrows and observes do not modify addresses before transferring to callee. After returning, we assign every parameter. Hence identical as \textsc{(E-assign)} on each \texttt{in-out} or \texttt{out} parameter. For every \texttt{in} parameter, \textit{SetFromExpr} is equivalent to either assigning $n.all$ or $n$.
		\item \textsc{(E-block)}, \textsc{(E-procedureDecl)}: trivial by applying the induction hypothesis successively to each statement of the block.
		\item \textsc{(E-ifConditionTrue)}, \textsc{(E-ifConditionFalse)}: the $Fusion$ operator does not change memory places and pointers, only permissions. 
		\item \textsc{(E-uninitDecl)}: every node is assigned to a new memory area (and pointers to null). Hence it is impossible to point to an existing memory area.
	\end{enumerate} 
	
\end{proof}

\subsection*{Non aliasing}

\setcounter{theorem}{0}
\begin{theorem}\textnormal{(No-aliasing)\textbf{.}}
	For every set of nodes $S$ in memory environment such that their memory cell is identical, consider their associated paths. Then, if there is one path that can be written, then all other paths in $S$ can neither be written nor read.
\end{theorem}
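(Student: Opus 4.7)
The plan is to proceed by structural induction on the derivation of $\Upsilon.c \xRightarrow{} \Upsilon'$, maintaining the non-aliasing invariant together with the four lemmas (Normalization, Coherence, Readability, No-cycle). At each step I fix a set $S$ of memory nodes sharing an address in the post-state $\Upsilon'$, assume that one of the associated paths is writable under $\Phi'$, and deduce that every other path in $S$ is neither writable nor readable. Coherence ensures each element of $S$ has a witness in the permission tree so that everything can be tracked through $\Phi$; Readability lets me locate the writable path as the topmost one in its own permission tree and propagate \texttt{NO}/\texttt{W} marks through descendants without reverifying each child.

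The easy rules come first. (E-assignNull), (E-assignLiteral), and (E-assignNew) replace $x$ with \emph{fresh} cells and \texttt{Null}/default children, so the alias relation can only shrink relative to $\Upsilon$ and the induction hypothesis transfers unchanged. (E-block), (E-uninitDecl), and (E-procedureDecl) compose the induction hypothesis step by step, and the if-rules are even simpler since only one branch executes while the permission side uses $\textit{Fusion}$, which can only lower permissions by glb and hence preserves CREW.

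The meaty non-call cases are (E-assignName) and (E-assignAccess). The critical observation for (E-assignName) is that $\textit{Assign}$ copies scalar values in place but shares the pointer subtrees: every $Access$ node strictly below the moved name $n$ creates a new alias under $x$. The companion rule (P-assignDeepName) anticipates exactly this by forcing all strict extensions of $n$ with more indirections to \texttt{NO}, every strict deep extension with the same depth to \texttt{W}, and every prefix of $n$ to write-only via (P-M-*); Readability then guarantees that no path reachable through $n$ retains read or write access to the freshly aliased cells, while only the $x$-side paths keep \texttt{RW}. (E-assignAccess) is handled symmetrically using (P-SM-*) and (P-assignAccess), which strip the descendant subtree of $n$ of all permissions because the address is being exported.

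The hardest case will be (E-call), because CREW must be maintained at two distinct intermediate points: the state $\Upsilon'$ handed to the callee, where the formals $x_i$ fully alias the borrowed actuals, and the state $\Upsilon'''$ produced after return. At entry, (P-call) applies (P-B-entryPointInOut), (P-B-entryPointOut), and (P-O-entryPoint) sequentially on the intermediate $\Phi_i$, each demanding \texttt{RW} (or \texttt{R,RW}) on the actual before forcing it and its extensions to \texttt{NO}, or to \texttt{R} for shallow parts; this serialization is exactly what rules out two arguments borrowing overlapping memory, and it forbids any previously observed subtree from being later borrowed as writable. Inside the callee, (P-procedureDecl) seeds the formals with matching \texttt{RW}/\texttt{R}/\texttt{W} permissions and the induction hypothesis applied to the body discharges the invariant, with its terminal requirement that each borrowed $x_i$ end at \texttt{RW}. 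At return, $\textit{SetFromExpr}$, $\textit{Assign}$, and $\textit{PermRelease}$ restore caller permissions without adding aliases beyond those already handled. The main obstacle is precisely this serialization argument: I must verify that between steps $i$ and $i+1$ of the borrow sequence, Readability and Normalization guarantee any newly-aliased subtree is globally marked inaccessible before the next actual parameter is evaluated, so no overlap can slip through, and that the same ordering reasoning carries over to the post-call assignment phase where aliases could in principle be recreated.
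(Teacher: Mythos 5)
Your proposal follows essentially the same route as the paper's proof: induction on the operational semantics with a case per rule, using Normalization, Coherence, Readability and No-cycle as auxiliary invariants, locating new aliases at the first \textit{Access} nodes for \textsc{(E-assignName)}/\textsc{(E-assignAccess)}, and handling \textsc{(E-call)} by checking the invariant both at transfer to the callee and at return, with the observe-before-borrow ordering ruling out overlapping arguments. The remaining verification you flag for the call case is dispatched in the paper at the same level of detail you sketch, so there is no substantive divergence.
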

\begin{preuve}
To show that each theorem holds at each point of the program, we reason by induction on the semantics: 

\begin{enumerate}
	\item \textsc{} \infrule[E-assignNull]
	{} 
	{\Upsilon.~x \texttt{:= null} \xRightarrow{}_s \Upsilon[x.all\mapsto \texttt{Null}]}
	\setcounter{XxmpX}{0}
		Let us take any set $S$ in the environments $\Phi', \Upsilon'$ after executing one step of the semantics. Given that we do not modify addresses during assignment, this set $S$ also verifies, by induction hypothesis, non-aliasing before executing that step, in the environments $\Phi, \Upsilon$. There is in $S$ at most one ancestor of $x$ \textnormal{(No-cycle on $\Phi$)}. If there are none, there is nothing to prove. If there is one (that we name $y$), let us consider the two cases of non-aliasing in $\Phi, \Upsilon$, before executing that step.
		\begin{itemize}
			\item If $y$ is the path that can be written (there is only one by induction hypothesis), then using the fact paths not related to $x$ do not get their permissions changed, we show that all others paths cannot be written or read after executing one step of the semantics. 
			\item By (Readability), $y$ cannot be \texttt{R} in $\Phi$. Hence it can only be \texttt{NO}, which leads to a contradiction. This means that if $y$ can be read, then it can also be written.
			\item If $y$ cannot be written nor read, then its permission is \texttt{NO} before executing the semantics. By normalization of $\Phi$, we can say that the glb of its children is also \texttt{NO} (and recursively). Which contradicts \texttt{RW} permission for $x$.
		\end{itemize} 
	\item \textsc{}
	\infrule[E-assignLiteral]
	{x:\tau\in\Gamma \andalso e~ \textnormal{literal}} 
	{\Upsilon.~x \texttt{:=}~ e \xRightarrow{}_s \Upsilon[x.value \mapsto e]}
	Exactly the same as \textsc{(E-assignNull)}.

	\item \textsc{} \infrule[E-assignName]
	{n~ \textnormal{name}} 
	{\Upsilon.~x \texttt{:=}~ n \xRightarrow{}_s\Upsilon[Assign(x, \Upsilon(n))]}
	
	Let us take any set $S$ in the environments $\Upsilon',\Phi'$ after executing one step of the semantics. Nodes can only get aliased between the first \textit{Access} node descendant of assigned and moved node (because values are recursively copied, up to the first \textit{Access} node encountered in which the pointer is copied, which creates an alias). Like \textsc{(E-assignNull)}, we can consider only the case when the moved or assigned path is the one that can be written. 
	
	Executing one step of the semantics creates an alias in $x$ (the assigned path) for every node of any subtree rooted under an $Access$ node that is a direct descendant of $n$. However, the rule \textsc{(P-assignDeepName)} says that when encountering an $Access$ node, all the further descendants are set to \texttt{NO}, hence their paths cannot be written. The rules \textsc{(P-M-ident)}, \textsc{(P-M-field)}, \textsc{(P-M-deref)} guarantee also that any extension of $n$ cannot be read anymore. This solves the case for elements of $S$ that are descendant of $x$ with different number of \texttt{.all}. 
	
	For elements of $S$ that are descendant of $x$ with same number of \texttt{.all}, no alias is created during assignment, hence induction hypothesis can be applied directly. 
	
	Hence, the assigned path gets \texttt{RW} permission (as well as any extensions), but any aliased subtree of this node cannot be written nor read in the moved tree, hence only one aliased path be written, and all others can neither be written nor read.

\item \textsc{}
\infrule[E-assignAccess]
{n~ \textnormal{name}} 
{\Upsilon,\Phi.~x \texttt{:=}~ n\texttt{'Access} \xRightarrow{}_s \Upsilon[x.all\mapsto \Upsilon(n)]}
	
	Let us take any set $S$ in the environments $\Phi', \Upsilon'$ after executing one step of the semantics. Nodes can only get aliased from the subtree we take address and the one that is assigned using \texttt{'Access}. Like \textsc{(E-assignNull)}, we can only consider exclusive write aliasing, as well as consider the case when the moved or assigned path is the one that can be written. 
	
	Executing one step of the semantics creates an alias of $x.all$ (the assigned path) with the subtree rooted at $n$. However, the rules \textsc{(P-assignAccess)}, \textsc{(P-SM-ident)}, \textsc{(P-SM-field)}, \textsc{(P-SM-deref)} ensure that all the subtree as well as its parents up to the first Access node encountered are set to \texttt{NO}, hence any of the aliased path (extensions of $n$) cannot be written. The rules \textsc{(P-M-ident)}, \textsc{(P-M-field)}, \textsc{(P-M-deref)} that get applied when applying \textsc{(P-SM-deref)} also guarantee that those aliased paths cannot be read.
	
	Hence, the assigned path gets \texttt{RW} permission (as well as any extension), but any aliased subtree of this node cannot be written or read in the moved tree, hence only one aliased path can be written, and all others can neither be written nor read.

\item \textsc{} \infrule[E-assignNew]
{} 
{\Upsilon.~x \texttt{:=}~ \texttt{new}~\tau \xRightarrow{}_s\Upsilon[x.all\mapsto \textit{fresh}(\tau)]}
\setcounter{XxmpX}{0} 
Exactly the same as \textsc{(E-assignNull)}, given that we allocate fresh memory area not aliased with anything existing before.

\item \textsc{}
\infrule[E-call]
{e_{1}..e_a~ \textnormal{with mode \texttt{in}}, \\ e_{a+1}...e_b~ \textnormal{with mode \texttt{in-out}}, \\~ e_{b+1}...e_n~ \textnormal{with mode \texttt{out}}, \\ \Upsilon'=\{\forall~ 0 < i \leq n, x_i \mapsto \textit{GetFromExpr}_{\Upsilon}(e_i) \} \\
	\Upsilon'.~\texttt{procedure} ~P(x_1,~...~,x_n) \xRightarrow{}_d \Upsilon'' \\
	\Upsilon'''=\Upsilon[\forall~ 0 < i \leq a, \textit{SetFromExpr}_{\Upsilon''}(e_i,x_i)\\
	\forall~ a < i \leq b, Assign(e_i, \Upsilon''(x_i)) \\
	\forall~ b < i \leq n, Assign(e_i, \Upsilon''(x_i))] 
}
{\Upsilon.~P(e_1, ~...~, e_n) \xRightarrow{}_s \Upsilon'''}

The case for procedure calls is particular: indeed we apply our inductive hypothesis after modifying environments, hence we have to guarantee that the invariants not only hold at the end of the rule, but also at the moment we transfer the control flow to the callee (ie for $\Upsilon'$ in \textsc{(P-call)} and $\Phi'$ in \textsc{(P-procedureDecl)}).

Let us consider such a set $S$ in $\Upsilon'$ (at the moment transferring to the callee). Suppose that one element of $S$ can be written. Then the rule \textsc{(P-procedureDecl)} guarantee that this element is borrowed (it is a borrow of a subtree $T$ of $\Upsilon$). Hence the rules \textsc{(P-B-name)}, \textsc{(P-B-name-Out)}, \textsc{(P-B-takeAccess)}, \textsc{(P-B-nullValue)}, guarantee that the borrowed subtree has permission set to \texttt{NO} if deep, and \texttt{R} if shallow. Note that the borrowed subtree cannot have been observed, given that we borrow after observing, hence we require \texttt{RW} permission on an argument that has been set to \texttt{R} by observation. Thus, it is impossible for $T$ to be observed, hence creating another element of $S$ that could be readable. For the same reason, the hypothesis of the rules \textsc{(P-B-entryPointInOut)} and  \textsc{(P-B-entryPointOut)} guarantee that the argument cannot be borrowed twice. Hence if one path can be written, all others cannot be neither written nor read.
\item \textsc{} 
\infrule[E-block]
{\forall j>0,~ \Upsilon_j .~ i_j \xRightarrow{}_s \Upsilon_{j+1}} 
{\Upsilon_1. ~{\texttt{begin}~ i_1 ~...~ i_n~ \texttt{end}}  \xRightarrow{}_s \Upsilon_{n+1}}
Trivial by applying the induction hypothesis successively to each statement of the block.

\item \textsc{} 

\infrule[E-ifConditionTrue]
{\Upsilon .~ i_1 \xRightarrow{}_s \Upsilon' } 
{\Upsilon .~\texttt{if * then} ~ i_1 ~ \texttt{else} ~ i_2~ \texttt{end if} \xRightarrow{}_s \Upsilon'}
	Let us consider such a set $S$. At the end of executing the first block, they are non-aliased, at the end of the second block, they are also non-aliased. Hence applying \textit{Fusion} that only takes the $glb$ cannot give more permissions to a path than it has before. Hence aliasing cannot occur.
\item \textsc{}

\infrule[E-ifConditionFalse]
{\Upsilon .~ i_2 \xRightarrow{}_s \Upsilon' } 
{\Upsilon .~\texttt{if * then} ~ i_1~ \texttt{else} ~ i_2~ \texttt{end if} \xRightarrow{}_s \Upsilon'}

Exactly the same as \textsc{(E-ifConditionTrue)}.

\item \textsc{} 
\infrule[E-uninitDecl]
{x:\tau\in\Gamma} 
{\Upsilon.~ x : \tau \xRightarrow{}_d \Upsilon[x\mapsto \textit{fresh}(\tau)]}
	It cannot point to existing memory area, hence cannot alias.
	
\item \textsc{} \infrule[E-procedureDecl]
{
	\Upsilon_1 = \Upsilon \andalso	\forall k, \Upsilon_k . ~d_k \xRightarrow{}_d \Upsilon_{k+1} \andalso
	\Upsilon_{m+1} . ~i \xRightarrow{}_d \Upsilon' }
{\Upsilon .~\texttt{procedure} ~P(x_1, ... , x_n) ~\texttt{is} ~d_1, ..., d_m ~\texttt{begin}~ i~ \texttt{end} \xRightarrow{}_d \Upsilon'} 
	Like \textsc{(E-block)}, we use inductive hypothesis to each declaration and then each instruction of the body.
\end{enumerate} 

\end{preuve}

\end{document}